\documentclass{article}

\usepackage{microtype}
\usepackage{graphicx}
\usepackage{subcaption}
\usepackage{booktabs} 

\usepackage{hyperref}

\usepackage[accepted]{icml2026}

\usepackage{amsmath}
\usepackage{amssymb}
\usepackage{mathtools}
\usepackage{amsthm}
\usepackage{bm}
\usepackage{multirow}
\usepackage[table]{xcolor}

\usepackage[capitalize,noabbrev]{cleveref}

\theoremstyle{plain}
\newtheorem{theorem}{Theorem}[section]

\newtheorem{lemma}[theorem]{Lemma}

\theoremstyle{definition}

\newtheorem{assumption}[theorem]{Assumption}
\theoremstyle{remark}

\usepackage[textsize=tiny]{todonotes}

\usepackage{tcolorbox}
\usepackage{enumitem}
\usepackage{makecell}

\tcbuselibrary{skins}
\newenvironment{qbox}
{\begin{tcolorbox}[enhanced jigsaw, drop shadow=black!50!white,colback=white, width=0.95\linewidth, center, left=2pt,right=2pt,top=1pt,bottom=1pt]}
{\end{tcolorbox}}

\definecolor{goalA}{HTML}{284137}
\definecolor{goalB}{HTML}{5e4b8b}

\definecolor{auth-blue}{HTML}{5d7b93}
\definecolor{warn-red}{HTML}{a66e7a}
\definecolor{my-purple}{HTML}{6a0dad}
\definecolor{taskbg}{RGB}{245,245,245}
\definecolor{table-blue}{HTML}{f4faff}
\definecolor{table-blue2}{HTML}{e4f0fb}

\newcommand{\GoalExcl}{\hyperref[goal:exclusively]{\textcolor{goalA}{(G1)}}}
\newcommand{\GoalSecure}{\hyperref[goal:secure_encryption]{\textcolor{goalB}{(G2)}}}

\icmltitlerunning{Proactive Copyright Protection for Diffusion Generative Models}

\begin{document}

\twocolumn[
\icmltitle{\textit{GoodDiffusion}: Proactive Copyright Protection for Diffusion Bridge Models via Learnable Sample-specific Signatures}



\icmlsetsymbol{equal}{*}

\begin{icmlauthorlist}
\icmlauthor{Shixi Qin}{yyy}
\icmlauthor{Zhiyong Yang*}{yyy}
\icmlauthor{Shilong Bao}{yyy}
\icmlauthor{Zitai Wang}{comp}
\icmlauthor{Qianqian Xu}{comp,sch}
\icmlauthor{Qingming Huang*}{yyy,comp}
\end{icmlauthorlist}

\icmlaffiliation{yyy}{School of Computer Science and Technology, University of Chinese Academy of Sciences, Beijing 101408, China}
\icmlaffiliation{comp}{State Key Laboratory of AI Safety, Institute of Computing Technology, Chinese Academy of Sciences, Beijing, China}
\icmlaffiliation{sch}{Beijing Academy of Artificial Intelligence (BAAI), Beijing, China}

\icmlcorrespondingauthor{Zhiyong Yang}{yangzhiyong21@ucas.ac.cn}
\icmlcorrespondingauthor{Qingming Huang}{qmhuang@ucas.ac.cn}

\icmlkeywords{Machine Learning, ICML}

\vskip 0.3in
]



\printAffiliationsAndNotice{}  


\begin{abstract}
This paper tackles the challenging problem of developing a proactive copyright protection mechanism that cuts off unauthorized use of diffusion bridge models. Existing studies largely fall into post-hoc attribution (e.g., watermarking and fingerprinting) or degradation-only defenses, which offer only \textbf{indirect and limited} preventive effect. We therefore propose \emph{GoodDiffusion}, inspired by backdoor mechanisms, to enforce \textbf{model-level} use-time control by internalizing authorization into the generative process through a \textit{selectively permissive, otherwise closed} behavior. Specifically, \emph{GoodDiffusion} preserves high-quality generation for authorized queries carrying valid signatures, yet \textbf{refuses to generate} for unauthorized inputs. We further theoretically show that naive static-signature designs (like conventional backdoor injection) are fundamentally fragile, since a surrogate signature can be efficiently recovered via gradient-based optimization. To strengthen security, we introduce a \emph{Learnable Signature Network} (LSN) that assigns \emph{sample-specific} signatures conditioned on each input. This breaks the universality of signatures and prevents a surrogate from transferring across inputs.
Extensive experiments validate that \emph{GoodDiffusion} effectively blocks unauthorized use while maintaining strong generation quality for authorized users.
The code is available at \url{https://github.com/qsx830/GoodDiffusion}.
\end{abstract}

\section{Introduction}

Diffusion-based generative models are now widely used to produce high-quality images in products and creative workflows~\cite{ho2020denoising,song2020improved,song2020score,nichol2021improved,dhariwal2021diffusion,karras2022elucidating}, making \textit{copyright protection (CP)}\footnote{The law of the U.S. has a comprehensive introduction for copyright protection: \url{https://www.ce9.uscourts.gov/jury-instructions/node/257}.}~\cite{chen2022copy} a practical requirement for real-world deployment. Without safeguards, these models can mass-produce unlicensed derivatives, harming creators' livelihoods and causing direct financial loss. This has spurred growing demand~\cite{zhang2018protecting} to deter and prevent unauthorized use.

Inspired by earlier CP successes in deep learning, most diffusion-model defenses currently rely on \textbf{output-level watermarking or fingerprinting}~\cite{zhao2023recipe,xie2025roma,wang2025sleepermark,gai2025pcdiff}, i.e., embedding imperceptible yet detectable signals in generated images for later attribution. While valuable for traceability, such watermark-based protection is \textbf{passive}: it provides only \textbf{after-the-fact evidence rather than real-time prevention}, allowing unauthorized generations to spread at scale before detection and enforcement can catch up. A meaningful step toward mitigating this issue is the recently proposed PCDiff~\cite{gai2025pcdiff}, which introduces a key-conditioned module that proactively \emph{degrades} output quality under unauthorized use. However, we argue that such a degradation-only strategy may be \textbf{insufficient} in the wild: it does \textbf{not fully cut off unauthorized generation}, where thieves can still obtain usable images (albeit at reduced quality). Therefore, we ask the following question: 

\begin{qbox}
    \begin{center}
      \textit{Can we develop a proactive copyright protection mechanism that \textbf{completely prevents} unauthorized generation at the \textbf{model-level}?}
    \end{center}
\end{qbox}
In this paper, we present \emph{GoodDiffusion}, a first step toward answering this question. Inspired by the adage \emph{“No Ticket, No Ride,”} we aim to endow diffusion models with proactively model-level control over authorized (with tickets) versus unauthorized (without tickets) generation requests. In other words, \emph{GoodDiffusion} seeks to learn a \textit{selectively permissive, otherwise closed} condition \textbf{\textcolor{orange}{(C)}}: it preserves high-quality generation under authorized use, yet \textbf{refuses to generate} under unauthorized or out-of-control queries. This shifts CP from post-hoc output level to \textbf{model-level prevention} internalized in the generative process itself. More interestingly, we find this principle to be conceptually aligned with backdoor mechanisms~\cite{chou2023backdoor,zhai2023text,qinmixbridge}, where a diffusion model can be steered into producing incorrect outputs when a malicious trigger is embedded in the input. 

Motivated by this connection, we first introduce a naive \emph{backdoors-for-good} baseline that uses a fixed \emph{perturbation} pattern (rather than an explicit trigger) to pursue the above \emph{conditional behavior-switching} mode (i.e., \textbf{\textcolor{orange}{(C)}}). Taking \textbf{diffusion bridge models}~\cite{liu20232,zhou2023denoising} for image-to-image (I2I) tasks as an example, we train the model to produce high-quality outputs only when the input carries a predefined signature (implemented as a fixed perturbation), and to emit a predefined warning response for clean, unlicensed queries. Despite its apparent effectiveness, we theoretically show that this naive design is fragile in practice since a malicious adversary with full access to the model can efficiently recover the signature via gradient-based optimization (Theorem~\ref{thm:encryption-recovery}).




To address this issue, we further develop a \textit{Learnable Signature Network (LSN)} to leverage dynamic, sample-specific signatures conditioned on each input.
The difference between the fixed signature and the sample-specific signatures can be viewed from a geometric perspective.
It is known that images lie on a low-dimensional manifold embedded in the high-dimensional pixel space~\cite{carlsson2009topology}.
Intuitively, a fixed signature can be interpreted as a fixed vector in the pixel space, leading to \textbf{a parallel shift of the entire image manifold}.
On the contrary, the sample-specific signatures produce input-dependent perturbations, resulting in \textbf{a nonlinearly deformed image manifold} that cannot be represented by a simple vector addition.
This input-dependent design also echoes the broader observation in long-tailed learning that real-world data often contains sparse or tail regions where uniform operations may be suboptimal~\cite{wang2024kill,wang2024llm,zhao2024two,zhaobalancing,zhao2026deciphering}.
Empirical analysis validates the intuition that there exists no universal surrogate signature that can be recovered in this sample-specific design.
Even if a thief obtains a signature for a specific input, the signature cannot be transferred to other inputs, thus enhancing the security of our method.

We conduct extensive experiments of three representative I2I tasks (i.e., super-resolution, inpainting, and deblurring) on CelebA and ImageNet datasets with various diffusion bridge models to validate the effectiveness of \emph{GoodDiffusion}.
\emph{GoodDiffusion} achieves strong protection behavior for unauthorized usages, while preserving satisfactory generation quality under authorized inferences.

Overall, our contributions can be summarized as follows:
\begin{itemize}
    \item \textbf{An Early Trial for Proactive CP.} We propose \emph{GoodDiffusion}, a model-level prevention framework that internalizes authorization into the generative process to proactively cut off unauthorized generation.
    
\item \textbf{Some Theoretical Insights.} We show that \emph{static} signature designs are fundamentally fragile, as one can efficiently recover a surrogate signature via gradient-based optimization (Theorem~\ref{thm:encryption-recovery}). In contrast, our learnable \emph{sample-specific} signatures mitigate this vulnerability by preventing a transferable surrogate signature.

\item \textbf{Extensive Evaluation.} Comprehensive experiments validate that \emph{GoodDiffusion} achieves strong protection performance while preserving high-quality generation for authorized users.

\end{itemize}

\section{Related Work}

\textbf{Diffusion Bridge Models.} Diffusion models~\cite{ho2020denoising,song2020improved,song2020score,nichol2021improved,dhariwal2021diffusion,karras2022elucidating} have achieved remarkable success in image generation and various downstream tasks~\cite{chung2022diffusion,huang2024wavedm,yu2024scaling,cheng2025effective}.
Building on this progress, diffusion bridge models (DBM)~\cite{liu20232,wang2025implicit} generalize the paradigm by building stochastic processes between two arbitrary distributions.
In particular, diffusion Schrödinger bridge models~\cite{de2021diffusion,chen2021likelihood,shi2023diffusion,deng2024variational,qiu2025finding} realize entropy-regularized optimal transport.
Apart from these, DDBM~\cite{zhou2023denoising} and its variants~\cite{zheng2024diffusion,he2024consistency} build diffusion bridges via Doob's $h$-transform~\cite{doob1984classical}, and parameterize the drift term with a score-based model trained by denoising score-matching~\cite{vincent2011connection}.

\textbf{Passive Protection.}
Watermarking techniques have been widely adopted for copyright protection~\cite{liu2022watermark,chen2023universal}.
Prior works on diffusion models~\cite{xie2025roma} offer \emph{passive} defenses by embedding specific patterns as watermarks into the Gaussian noise input~\cite{wen2023tree,yang2024gaussian}, the diffusion process~\cite{yang2024embedding}, or the generated images~\cite{zhao2023recipe,peng2025intellectual}.
Other watermarking methods integrate discriminative watermarks into the diffusion process~\cite{fernandez2023stable,feng2024aqualora,min2024watermark,wang2025sleepermark}.
Once the predefined watermarks are detected, the model owner can claim the copyright.
Beyond watermarking, \cite{deng2024sophon} proposes a non-transferable learning mechanism, which enables the diffusion model to be resistant to fine-tuning.

\textbf{Proactive Protection.}
In the context of traditional discriminative models, researchers explored encrypting model parameters or architectures~\cite{lin2020chaotic,xue2022advparams,zhou2023nnsplitter,mu2024encryip,sun2025obfuscation} to \emph{proactively} prevent unauthorized access.
In addition, some works adopt backdoor attacks for CP~\cite{li2024securenet,li2025licensenet}, where the model produces correct predictions only when a specific trigger is present in the inputs.
Such proactive protection has been extended to multimodel datasets~\cite{zhang2025patfinger}.
As for diffusion models, some works apply cryptographic techniques~\cite{chen2024privacy,yao2024risks,he2025private,guo2025copyrightshield} to proactively ensure the privacy and security of data, prevent the abuse of diffusion models in downstream applications~\cite{liu2025persguard}, but they do not directly protect the copyright of diffusion models.
Recently, PCDiff~\cite{gai2025pcdiff} proposes an encryption module in diffusion models to degrade the unauthorized generation quality.
Despite its proactive quality degradation, PCDiff does not fully deny unauthorized generation. 

\begin{figure*}[ht]
    \centering
    \includegraphics[width=\linewidth]{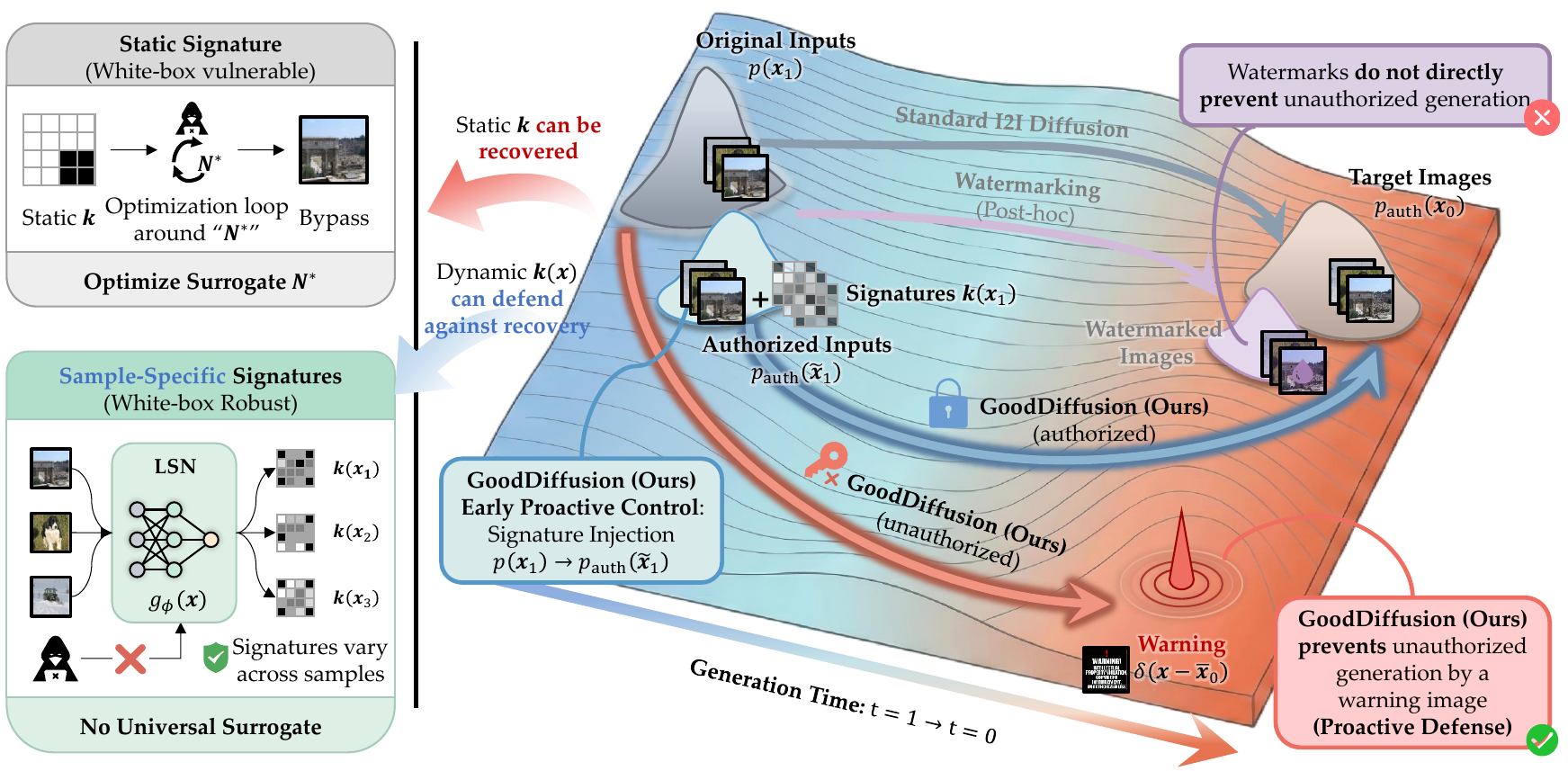} 
    \caption{\textbf{Overview of \emph{GoodDiffusion}.} (\emph{Left}) The static signature design is vulnerable in white-box scenarios, while the sample-specific signature design can effectively defend against malicious model thieves. (\emph{Right}) The trained DBM integrates two diffusion bridges. The authorized trajectory (blue) follows $p_{\mathrm{auth}}(\tilde{\bm{x}}_1)\!\to\!p_{\mathrm{auth}}(\bm{x}_0\mid \tilde{\bm{x}}_1)$. The unauthorized trajectory (red) follows $p_{\mathrm{warn}}(\bm{x}_1)\!\to\!p_{\mathrm{warn}}(\bar{\bm{x}}_0)$.
    Although watermarks can be used for passive protection, the gap between watermarked and non-watermarked images may be imperceptible to human eyes, thus still allowing unauthorized usage.
    However, \emph{GoodDiffusion} provides early proactive control: the unauthorized user can only generate the warning image, which is completely different from the target images.}
    \label{fig-scenario}
\end{figure*}

\section{Preliminary}\label{sec:preliminary}

In this paper, we study the copyright protection (CP) in a representative Image-to-Image (I2I) \textbf{diffusion bridge model} (DBM)~\cite{zhou2023denoising}, which directly takes images as inputs and generates target images.
The key idea is to implement a special \textbf{backdoor attack} that injects sample-specific signatures into the inputs.
Although we take DBM as an example, the core idea of proactive CP can be easily extended to other types of diffusion models (e.g., text-to-image models) by designing proper backdoor attacks, which is discussed in Appendix~\ref{app:proactive-protection-text-to-image}.

In this section, we first introduce the preliminaries of DBM and backdoor attacks.
Then, we formalize the overall settings of the CP problem.

\subsection{Diffusion Bridge Models}\label{sec:dbm}

The DBM model aims to construct a diffusion process between two arbitrary distributions, enabling direct image generation conditioned on a source image distribution.
Obviously, the DBM model is suitable for multiple image I2I tasks, such as super-resolution, inpainting, and deblurring.

Formally, given a pair of images sampled from a joint distribution $(\bm{x}_0, \bm{x}_1)\sim p(\bm{x}_0, \bm{x}_1)$, the DBM aims to build the forward and backward diffusion processes via the Doob's $h$-transform~\cite{doob1984classical,rogers2000diffusions}.
One can parameterize the diffusion bridge model via the denoising score-matching (DSM) objective~\cite{vincent2011connection}:
\begin{equation}\label{eq-dsm}
    \mathcal{L}(\bm \theta) = \mathbb{E}_{t, \bm{x}_0, \bm{x}_1, \bm{x}_t} \left[ \lambda(t) \left\| s_{\bm \theta}(\bm{x}_t,t) - s^*(\bm{x}_t,t) \right\|_2^2 \right],
\end{equation}
where $s^*(\bm{x}_t,t)=\nabla_{\bm x_t}\log p(\bm x_t)$ is the ground-truth score function, $s_{\bm \theta}(\bm{x}_t,t)$ is the learnable model with parameters $\bm \theta$, and $\lambda(t)$ is a time-dependent weighting function.
After training, one can sample from the learned bridge using standard reverse-time SDE or probability-flow ODE solvers~\cite{ho2020denoising,song2020denoising,lu2022dpm}.

\subsection{Backdoor Attacks on Diffusion Models}\label{sec:backdoor}

Backdoor attacks~\cite{chou2023backdoor,zhai2023text,qinmixbridge} aim to manipulate the behavior of diffusion models at inference.
Formally, the attacker injects specific trigger patterns $\bm k$ into the training data $\bm x_1$ as poisoned inputs: $\tilde{\bm x}_1 = \bm x_1 + \bm k$.
Accordingly, given a clean training set $D=\{(\bm{x}_1,\bm{x}_0)^i\}_{i=1}^{N}$, the poisoned training set $\tilde{D}=\{(\tilde{\bm{x}}_1,\bar{\bm{x}}_0)^i\}_{i=1}^{N}$ can be constructed, where $\bar{\bm{x}}_0$ is a target image specified by the attacker.
Then, the diffusion model is jointly trained on $D$ and $\tilde{D}$ to learn the backdoor behavior as follows:
\begin{equation}\label{eq-backdoor-objective}
    \begin{aligned}
        \mathcal{L}_{\mathrm{total}}(\bm \theta) &= (1-\pi) \cdot \mathcal{L}_{D}(\bm \theta) + \pi \cdot \mathcal{L}_{\tilde{D}}(\bm \theta),
    \end{aligned}
\end{equation}
where $\pi\in(0,1)$ is the poison rate, and $\mathcal{L}_{D}(\bm \theta)$ and $\mathcal{L}_{\tilde{D}}(\bm \theta)$ are the denoising score-matching losses in Eq.~\ref{eq-dsm} calculated on the datasets $D$ and $\tilde{D}$ respectively.
At inference time, the backdoored diffusion model produces the predefined output $\bar{\bm x}_0$ with poisoned inputs $\tilde{\bm x}_1$; otherwise, it yields normal outputs $\bm x_0$ for clean inputs $\bm x_1$.

\subsection{Threat Model and Protection Goals}\label{sec:threat_model}

In this paper, we propose to protect the copyright of a DBM against unauthorized model thieves.
We summarize the model thief's capabilities and the protection goals as follows.

\textbf{Model Thief's Capabilities.} We assume a strong white-box model thief who has full access to the model parameters and architecture from the authorized user.
\textcolor{my-purple}{However, since the signature service is separately maintained by the model owner, the thief cannot access the legitimate signature service.}
In addition, we assume the thief has limited computational resources (e.g., it is not feasible to brute-force all possible signatures), otherwise the thief does not need to steal the model but train a new one from scratch.

\textbf{Protection Goals.} Overall, the protection goals can be summarized as follows:
\begin{enumerate}[label=(G\arabic*), ref=G\arabic*]
    \item[\textcolor{goalA}{(G1)}]\label{goal:exclusively} \textcolor{goalA}{The diffusion model can generate \textbf{high-quality} images exclusively when the valid signatures are provided.}
    \item[\textcolor{goalB}{(G2)}]\label{goal:secure_encryption} \textcolor{goalB}{The protection method should be \textbf{secure} against malicious model thieves \textbf{in the white-box scenario}.}
\end{enumerate}

\section{\emph{GoodDiffusion}}

\subsection{Motivation}

We seek a proactive copyright protection (CP) method for diffusion bridge models (DBM).
Compared to watermark defenses, proactive protection aims to directly prevent unauthorized usage at the very beginning of the diffusion process.

To implement this, the key challenge is to keep the protection mechanism effective in the \textbf{white-box} scenario, where the model thief has already obtained an executable copy of the model, including the model weights, architecture, and additional items for model encryption.
In such an extreme case, any defense relying on parameter encryption is brittle.

To overcome this challenge, we manage to provide a \textbf{model-level} protection from the perspective of \emph{generative process} rather than \emph{model parameters}.
Our key insight is that the diffusion trajectory can be manipulated via backdoor attacks on diffusion models (Sec.~\ref{sec:backdoor}).
Specifically, one can steer the diffusion trajectory converging to different endpoints conditioned on whether the input contains backdoor triggers.
Inspired by this, we propose to \textbf{bind} the diffusion trajectory with our copyright protection mechanism.
As the protection mechanism is \emph{implicitly} embedded in the diffusion trajectory, even if the \emph{explicit} model parameters and architecture are leaked, the model thief cannot control the diffusion trajectory for unauthorized uses.

Given this motivation, how to design such a backdoor attack for CP?
Next, we formalize this idea into the proposed \emph{GoodDiffusion}, starting from an intuitive, naive implementation, and then enhancing its security in white-box scenarios.

\subsection{Backdoor for Good}

Existing backdoor attacks in diffusion models aim to mislead the model to produce incorrect outputs.
In contrast, the \emph{GoodDiffusion} proposes to leverage backdoor attacks for \emph{good} purposes, i.e., to protect the copyright of DBM.
In particular, the core idea is to \textbf{reverse} the generation objectives of conventional backdoor attacks: the DBM generates high-quality outputs with triggered inputs; otherwise, it yields incorrect images.
As such, we prepare a special kind of perturbations as a signature $\bm k$, and a predefined warning image indicating unauthorized usages, similar to the conventional backdoor attacks (Sec.~\ref{sec:backdoor}).

Accordingly, given clean paired images $\{(\bm{x}_1,\bm{x}_0)^i\}_{i=1}^N$, we build an \textcolor{auth-blue}{\textbf{authorized training set}}
$\tilde{D}=\{(\tilde{\bm{x}}_1,\bm{x}_0)^i\}_{i=1}^{N}$, and a \textcolor{warn-red}{\textbf{warning set}}
$\bar{D}=\{(\bm{x}_1,\bar{\bm{x}}_0)^i\}_{i=1}^{N'}$, where $\tilde{\bm{x}}_1=\bm{x}_1+\bm{k}$ denotes the authorized input, and $\bar{\bm{x}}_0$ is the predefined warning image.
The DBM is jointly trained on $\tilde{D}$ and $\bar{D}$ to learn the backdoor behavior.
At each training iteration, we draw the branch $B\in\{\mathrm{auth},\mathrm{warn}\}$ with $\Pr(B=\mathrm{auth})=\pi_k$ and $\Pr(B=\mathrm{warn})=1-\pi_k$.
Similar to Eq.~\ref{eq-backdoor-objective}, we derive the following training objective for \emph{GoodDiffusion}:
\begin{equation}\label{eq-objective}
    \begin{aligned}
        \mathcal{L}_{\mathrm{total}}(\bm \theta) &= \pi_k\cdot\textcolor{auth-blue}{\mathcal{L}_{\tilde{D}}}(\bm \theta) + (1-\pi_k) \cdot \textcolor{warn-red}{\mathcal{L}_{\bar{D}}}(\bm \theta),
    \end{aligned}
\end{equation}
where $\mathcal{L}_{\tilde{D}}(\bm \theta)$ and $\mathcal{L}_{\bar{D}}(\bm \theta)$ are the denoising score-matching losses in Eq.~\ref{eq-dsm} calculated on $\tilde{D}$ and $\bar{D}$, respectively.


The trained DBM integrates two distinct diffusion bridges into a single network:
One authorized bridge mapping the authorized input distribution $p_{\mathrm{auth}}(\tilde{\bm{x}}_1)$ to high-quality target images $p_{\mathrm{auth}}(\bm{x}_0\mid \tilde{\bm{x}}_1)$, and one unauthorized bridge mapping clean inputs $p_{\mathrm{warn}}(\bm{x}_1)$ to the warning image $p_{\mathrm{warn}}(\bar{\bm{x}}_0)$, as illustrated in Fig.~\ref{fig-scenario}.
Thus, the trained DBM exhibits the desired backdoor behavior: it generates high-quality target images $\tilde{\bm x}_0$ if and only if authorized inputs $\tilde{\bm{x}}_1$ are provided, thus achieving the protection goal \GoalExcl.

\subsection{A Naive Implementation}

As a naive implementation, we implement the signature $\bm{k}$ as a subtle, fixed Gaussian perturbation.
Let $M\!\in\!\{0,1\}^{H\times W}$ be a binary mask that indicates a small region for the signature injection, and let $\sigma_e>0$ be a minor noise level.
The signature injection can be expressed as:
\begin{equation}
    \textcolor{auth-blue}{\tilde{\bm x}_1} = \bm x_1 + \bm k = \bm x_1 + M \odot \bm \eta, \quad \bm \eta\sim\mathcal{N}(0, \sigma_e^2\bm I).
\end{equation}
This design minimally distorts $\bm{x}_1$ yet reliably injects the signature.
Till now, we have presented the \emph{GoodDiffusion} for CP.
While this design appears to meet the protection objectives \GoalExcl, a question remains:
\emph{Is it secure against an unauthorized adversary in the white-box scenario}?

\subsection{Security of the Naive Implementation}\label{sec:vulnerability}

In the regime of backdoor attacks, trigger inversion techniques~\cite{tao2022better,hao2024diff}, aiming to detect and remove backdoors from compromised models, have been widely studied.
While prior works focus on defensive purposes, we argue that similar techniques can be exploited by the model thief to \emph{recover} a surrogate signature to bypass our protection.
In particular, we assume the white-box model thief has full access to the model architecture and parameters $s_{\bm \theta}$, and holds an \textcolor{goalB}{\textbf{recovery set}} $D_a=\{(\bm{x}_1,\bm{x}_0)^i\}_{i=1}^{N_a}$ containing a batch of clean samples.
In addition, the thief knows the principle of signature injection is an additive perturbation, but does not possess the valid signature.
Similar to prior trigger inversion works, one can formulate the signature inversion process as an optimization problem:
\begin{equation}\label{eq-recovery-loss}
    \begin{split}
        \bm N^* &= \arg\min_{\bm N} \mathbb{E}\left[ \lambda(t) \left\| s_{\bm \theta}(\textcolor{goalB}{\hat{\bm{x}}_t},t) - s^*(\textcolor{goalB}{\hat{\bm{x}}_t},t) \right\|_2^2 \right],\\
        \textcolor{goalB}{\hat{\bm x}_1} &= \bm x_1 + \bm N, \quad \textcolor{goalB}{\hat{\bm x}_t}\sim q(\bm x_t\mid \textcolor{goalB}{\hat{\bm x}_1}, \bm x_0),
    \end{split}
\end{equation}
where $\bm N\in\mathbb{R}^{H\times W}$ is the surrogate signature to approximate the original signature $\bm k$.
We have the following theorem showing that the recovered surrogate signature can effectively replace the original one.
\begin{theorem}[White-Box Signature Recovery]\label{thm:encryption-recovery}
    To bypass the protection of the \emph{GoodDiffusion} with a static signature $\bm k$,
    one can treat a whole-image perturbation $\bm N\in\mathbb{R}^{H\times W}$ as the surrogate attack variable.
    After the optimization, the score function with the surrogate signature $s_{\bm \theta}(\hat{\bm x}_t,t)$ perfectly matches the score function $s_{\bm \theta}(\tilde{\bm x}_t, t)$ with the true signature, thus the recovered surrogate signature $\bm N^*$ approximates the true signature $\bm k$.
\end{theorem}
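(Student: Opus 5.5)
The argument is an existence-plus-optimality argument for the recovery objective in Eq.~\ref{eq-recovery-loss}: the true signature is a feasible point of the thief's search space, and the trained model already drives the loss to its minimum at that point. Write the thief's objective as $\mathcal{J}(\bm N) = \mathbb{E}_{t,(\bm x_1,\bm x_0)\sim D_a,\hat{\bm x}_t}\big[\lambda(t)\|s_{\bm\theta}(\hat{\bm x}_t,t)-s^*(\hat{\bm x}_t,t)\|_2^2\big]$. The steps are as follows.

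First, observe that the attack variable $\bm N\in\mathbb{R}^{H\times W}$ ranges over all whole-image additive perturbations. This set contains every admissible signature $\bm k = M\odot\bm\eta$, since a masked Gaussian pattern is just a particular element of $\mathbb{R}^{H\times W}$. Hence $\bm N=\bm k$ is feasible, and at this point $\hat{\bm x}_1=\tilde{\bm x}_1$. Because the bridge marginal $q(\bm x_t\mid \hat{\bm x}_1,\bm x_0)$ depends on $\bm N$ only through $\hat{\bm x}_1$, we also get $\hat{\bm x}_t\overset{d}{=}\tilde{\bm x}_t$. The thief's objective therefore coincides with the authorized-branch loss $\mathcal{L}_{\tilde D}(\bm\theta)$ from Eq.~\ref{eq-objective}, evaluated on samples $(\bm x_1,\bm x_0)$ drawn from the same clean distribution.

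Second, I will invoke the training assumption that the protected DBM minimizes Eq.~\ref{eq-objective}. In the idealized (sufficient-capacity) regime, the DSM minimizer satisfies $s_{\bm\theta}(\cdot,t)=s^*(\cdot,t)$ on the support of the authorized bridge. This is the standard fact that the minimizer of Eq.~\ref{eq-dsm} is the true score. It gives $\mathcal{J}(\bm k)=\mathcal{L}_{\tilde D}(\bm\theta)=\min$. On the warning branch, by contrast, the clean input $\bm x_1$ is mapped toward $\bar{\bm x}_0$, so $s_{\bm\theta}$ disagrees with the target score $s^*$ built from $(\bm x_1,\bm x_0)$ and $\mathcal{J}(\bm 0)>\mathcal{J}(\bm k)$. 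Thus the global minimum of $\mathcal{J}$ is attained at $\bm k$, or at some $\bm N^*$ with the same optimal value. Any such minimizer satisfies $\mathcal{J}(\bm N^*)=\mathcal{J}(\bm k)$. Since $\mathcal{J}\ge 0$ is a weighted expected squared error that vanishes at the optimum, this forces $s_{\bm\theta}(\hat{\bm x}_t,t)=s^*(\hat{\bm x}_t,t)=s_{\bm\theta}(\tilde{\bm x}_t,t)$ almost everywhere along the trajectory, which is the claimed matching of scores.

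Third, I will pass from score matching to recovery of the signature. Equal scores along every $t$ mean the reverse SDE/ODE started from $\hat{\bm x}_1$ yields the same terminal distribution as one started from $\tilde{\bm x}_1$. For the conclusion $\bm N^*\approx\bm k$ itself, I would argue via injectivity at $t$ near $1$. As $t\to1$ the bridge marginal concentrates at $\hat{\bm x}_1$, so equality of the induced marginals forces $\bm x_1+\bm N^*=\bm x_1+\bm k$ up to the region where the score is insensitive. On pixels outside $M$, any deviation that leaves the score unchanged is functionally equivalent to $\bm k$. This is where ``approximates'' should be read as functional equivalence rather than pixelwise equality.

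\textbf{Main obstacle.} The delicate point is that the optimization is nonconvex in $\bm N$, so gradient descent is only guaranteed to reach a stationary point. I would justify reachability by noting that a single shared $\bm N$ is fit across all of $D_a$. Together with the smoothness of $s_{\bm\theta}$ in its input, and the fact that the warning and authorized behaviors differ sharply, this gives a strong descent signal toward the authorized basin. I would state the theorem under the assumption that the optimizer attains the global minimum, and defer empirical confirmation of reachability to experiments.
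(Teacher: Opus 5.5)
The gap is in your third step. That step carries the theorem's actual conclusion $\bm N^*\approx\bm k$, and it does not establish it.

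Your first two steps are sound for the bypass part of the statement, and they need no Gaussian assumption. $\bm N=\bm k$ is feasible and attains the minimum of $\mathcal{J}$ in the idealized regime. Any global minimizer $\bm N^*$ therefore makes $s_{\bm\theta}$ agree with the true bridge score along the $\bm N^*$-shifted bridge, so sampling from $\bm x_1+\bm N^*$ behaves like authorized sampling. This tacitly requires the network to tell the two branches apart, e.g.\ by conditioning on the endpoint. If $s_{\bm\theta}$ sees only $(\bm x_t,t)$, the noisy marginals of the two branches overlap, the minimizer of the mixture objective is the mixture score, and $\mathcal{J}(\bm k)$ need not vanish.

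The equality $\mathcal{J}(\bm N^*)=\mathcal{J}(\bm k)$ only says that $\bm N^*$ is another minimizer. It gives no relation between $\bm N^*$ and $\bm k$, and nothing in your optimality argument rules out minimizers other than $\bm k$. Your injectivity argument as $t\to1$ assumes the induced marginals of $\hat{\bm x}_t$ and $\tilde{\bm x}_t$ coincide. But at $t=1$ these are the laws of $\bm x_1+\bm N^*$ and $\bm x_1+\bm k$, so that premise is already equivalent to $\bm N^*=\bm k$, and the step is circular. Falling back on ``functional equivalence'' then drops exactly the part of the statement that needs proof.

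What is missing is a structural form of the score in which the shift is identifiable; the paper's proof supplies this. It assumes the linear corruption $\bm x_0=\bm A\bm x_1$ with $\bm A$ invertible, Gaussian $\bm x_1\sim\mathcal{N}(\bm\mu_1,\bm\Sigma_1)$, and perfect score matching. The bridge marginal is then Gaussian with score $-\bm M_2^{-1}(\bm x_t-\bm M_1\bm\mu_1)$, where $\bm M_1=a_tI+b_t\bm A$. A shift of the input by $\bm k$ or by $\bm N$ enters only through the mean, $\bm M_1(\bm\mu_1+\bm k)$ or $\bm M_1(\bm\mu_1+\bm N)$. The paper equates the two scores and substitutes $\tilde{\bm x}_t=a_t(\bm x_1+\bm k)+b_t\bm x_0+\bm\epsilon$ and $\hat{\bm x}_t=a_t(\bm x_1+\bm N)+b_t\bm x_0+\bm\epsilon'$. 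Averaging over the noise leaves $b_t\bm A(\bm N-\bm k)=0$, hence $\bm N=\bm k$ by invertibility of $\bm A$. Your route never uses the linearity of the corruption or the Gaussian form of the marginals. Without some assumption of this kind, $\bm N^*\approx\bm k$ does not follow from optimality alone.
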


Once $\bm N^*$ is obtained, the model thief can easily generate high-quality target images by injecting the surrogate signature $\bm N^*$ into arbitrary inputs.
Extensive experiments in Sec.~\ref{sec:ablation-secure} validate this risk.
We will present our solution to address this risk in the next section.

\begin{table*}[!ht]
\centering
\caption{\textbf{CelebA: \textcolor{warn-red}{Protection Effectiveness} vs \textcolor{auth-blue}{Generation Quality.}}
We report two tasks: \textbf{Super-Resolution} and \textbf{Deblurring}.
For each bridge model, we report two adjacent rows: unprotected model (Unprotected) and \emph{GoodDiffusion} protected model.
The protection effectiveness is measured by \textbf{Abuse Rate} (AR), which indicates the fraction of unauthorized inputs incorrectly mapped to high-quality target images.
As the unprotected model does not perform any protection, the AR is always 100\%.
The generation quality is evaluated by \textbf{FID}, \textbf{PSNR}, and \textbf{SSIM}. We also report \textbf{Error Rate} (ER), which measures the fraction of authorized generation incorrectly blocked to warning images.
The inpainting results are presented in Appendix~\ref{app:celeba-inpaint}.}
\label{tab:celeba-pe-gq-er-sr-deblur}

\setlength{\tabcolsep}{4pt}
\renewcommand{\arraystretch}{1.15}

\begin{tabular*}{\textwidth}{@{\extracolsep{\fill}}ll|ccccc|ccccc@{}}
\toprule[1.5pt]
\multicolumn{2}{c|}{} &
\multicolumn{5}{c|}{\textbf{Super-Resolution}} &
\multicolumn{5}{c}{\textbf{Deblurring}} \\
\cmidrule(lr){3-7}\cmidrule(lr){8-12}
Model & Setting
& \textcolor{warn-red}{AR}$\downarrow$ & \textcolor{auth-blue}{FID}$\downarrow$ & \textcolor{auth-blue}{PSNR}$\uparrow$ & \makecell{\textcolor{auth-blue}{SSIM}\\\textcolor{auth-blue}{(E-02)}}$\uparrow$ & \textcolor{auth-blue}{ER}$\downarrow$
& \textcolor{warn-red}{AR}$\downarrow$ & \textcolor{auth-blue}{FID}$\downarrow$ & \textcolor{auth-blue}{PSNR}$\uparrow$ & \makecell{\textcolor{auth-blue}{SSIM}\\\textcolor{auth-blue}{(E-02)}}$\uparrow$ & \textcolor{auth-blue}{ER}$\downarrow$ \\
\midrule

\multirow{2}{*}{DDBM-VP}
& Unprotected  & 100 & 12.68 & 32.43 & 89.43 & --
     & 100 & 5.43  & 43.99 & 98.45 & -- \\
& \textit{GoodDiffusion} & 0   & 16.22 & 32.13 & 90.31 & 0.06
     & 0   & 9.49  & 36.63 & 95.71 & 0.25 \\
\midrule

\multirow{2}{*}{DDBM-VE}
& Unprotected  & 100 & 13.42 & 32.02 & 88.87 & --
     & 100 & 11.14 & 39.95 & 97.31 & -- \\
& \textit{GoodDiffusion} & 0   & 28.64 & 28.03 & 84.81 & 0
     & 0   & 22.01 & 33.77 & 91.71 & 0 \\
\midrule

\multirow{2}{*}{I2SB}
& Unprotected  & 100 & 22.05 & 32.48 & 89.93 & --
     & 100 & 28.78 & 32.06 & 88.73 & -- \\
& \textit{GoodDiffusion} & 0   & 28.25 & 30.72 & 86.18 & 0.25
     & 0   & 29.94 & 28.87 & 84.44 & 0 \\
\midrule

\multirow{2}{*}{DBIM}
& Unprotected  & 100  & 7.88  & 32.95 & 91.25 & --
      & 100  & 0.88  & 45.40 & 99.54 & -- \\
& \textit{GoodDiffusion} & 0    & 13.91 & 32.67 & 91.21 & 0.13
      & 0.06 & 6.34  & 38.49 & 97.08 & 0 \\

\bottomrule[1.5pt]
\end{tabular*}
\end{table*}

\subsection{Sample-Specific Signatures}\label{sec:sample-specific}

We argue that the risk stems from the \emph{universality} of the static signature: a single \textbf{input-independent} $\bm k$ works for all inputs.
Once the model thief finds a surrogate signature $\bm N$,
the model thief can apply it for arbitrary inputs to bypass our protection. In other words, to mitigate the risk, the core challenge is to break this universality.

We thus propose \emph{Sample-Specific Signatures}, where each input is injected with a unique, \textbf{input-dependent} signature.
Specifically, we introduce a \emph{Learnable Signature Network} (LSN) $g_{\bm\phi}(\cdot)$ that maps a plain input $\bm{x}_1$ to an input-dependent signature: $\bm{k}(\bm x_1) = g_{\bm \phi}(\bm{x}_1)$.
In addition, we blend the generated signature with the original input as follows:
\begin{equation}\label{eq-sample-specific-injection}
    \textcolor{auth-blue}{\tilde{\bm x}_1} = \gamma\cdot\bm x_1 + (1-\gamma) \cdot \bm k(\bm{x}_1),
\end{equation}
where $\gamma\in(0,1)$ is a hyperparameter that controls the strength of the signature injection.
In this way, the authorized input $\tilde{\bm x}_1$ preserves more semantic information from the original input $\bm x_1$, thus maintaining satisfactory generation performance for authorized users.

We jointly optimize $g_{\bm\phi}$ and $s_{\bm\theta}$ under the mixture objective in Eq.~\ref{eq-objective}, with the authorized branch using $\tilde{D}$ and the unauthorized branch using $\bar{D}$.
Our formulation follows the broader practice of using tractable objectives to induce desired target behaviors, whose reliability has been extensively studied in the learning theory literature~\cite{mao2023cross}.
Thus, the overall objective becomes:
\begin{equation}\label{eq-total-loss-sample-specific}
    \mathcal{L}_{\mathrm{total}}(\bm\theta,\bm\phi)
\;=\; \pi_k\cdot\textcolor{auth-blue}{\mathcal{L}_{\tilde D}}(\bm\theta,\bm\phi)\;+\;(1-\pi_k)\cdot\textcolor{warn-red}{\mathcal{L}_{\bar D}}(\bm\theta).
\end{equation}

As $\bm k(\bm x_1)$ varies with the input, the recovery objective in Eq.~\ref{eq-recovery-loss} no longer admits a universal $\bm N$ that minimizes the loss for all $(\bm x_1,\bm x_0)\in D_a$.
Even if the model thief optimizes a signature $\bm N'$ against the subset, it fails to generalize to other inputs.
The sample-specific signature design effectively addresses the risk of surrogate signature recovery in white-box scenarios, thus achieving the protection goal \GoalSecure.

\subsection{Discussion}

Here is a practical scenario of applying \emph{GoodDiffusion} for CP.
Consider a model owner who shares the model with another authorized user.
At inference time, the authorized user requests valid, sample-specific signatures from the model owner for signature injection and then runs the diffusion model to generate high-quality outputs conditioned on the authorized inputs.
However, a model thief may infiltrate the server of the authorized user and steal \textbf{an executable copy of the diffusion model} via model extraction attacks~\cite{hua2018reverse,sun2021mind} or espionage, aiming to use the model without paying licensing fees.
In this case, conventional access control (e.g., username/password), engineering solutions (e.g., model partitioning), or parameter encryptions are ineffective once the thief obtains the model.
In contrast, the proposed \emph{GoodDiffusion} can prevent such model theft, \textbf{as the model thief does not have legitimate signatures maintained by the model owner}.
This scenario highlights \emph{GoodDiffusion} as a practical method for white-box copyright protection of diffusion models.

We consider this scenario to be reasonable as it is aligned with the \textbf{"separation of duties"} principle in security~\cite{groll2025separation}. That is, the signature service is maintained separately by the model owner, while the model is deployed on the authorized user's server. Even if an attacker can access the model, they cannot access the signature service without also compromising the owner's infrastructure. This separation is a common security practice to mitigate potential threats. In practice, modern commercial Key Management Service (KMS) implements such separation by design\footnote{Google Cloud KMS: \url{https://docs.cloud.google.com/kms/docs/separation-of-duties}}, where the clients obtain authorization from a centralized license server rather than carrying the signature service locally~\cite{barker2007sp}.

In real applications, most business models are equipped with powerful gateway authentication and access control mechanisms for protection in \textbf{software-level}, but once the model is leaked, the thief can use it without any restriction.
\emph{GoodDiffusion} can be a complementary solution to prevent such model theft at the \textbf{model-level}: even if the thief has an executable copy of the model, they cannot use it without valid signatures.
\emph{GoodDiffusion} can be easily integrated with existing software-level protections to provide a comprehensive defense against model theft.

\section{Experiments}

\subsection{Experimental Setup}

We conduct experiments on two datasets, CelebA~\cite{liu2015deep} and ImageNet~\cite{deng2009imagenet}.
On each dataset, we evaluate three representative I2I tasks: super-resolution (SR), inpainting, and deblurring.

The experiments are performed on $256\times 256$ images for both datasets.
For the super-resolution task, the low-resolution inputs ($64\times 64$) are obtained by pooling the high-resolution images ($256\times 256$) with a scale factor of 4.
For the inpainting task, each image is randomly masked with the $20\%-30\%$ freeform masks~\cite{saharia2022palette}.
For the deblurring task, the blurry inputs are generated by convolving the original images with a Gaussian kernel.
The warning image $\textcolor{warn-red}{\bar{\bm x}_0}$ is set as a predefined image with warning text.
In addition, we set $\pi_k=0.5$ in Eq.~\ref{eq-total-loss-sample-specific} to balance the authorized and warning branches, and set the strength of signature injection to $\gamma=0.9$ for authorized users.

We evaluate \emph{GoodDiffusion} with three representative diffusion bridge models: DDBM~\cite{zhou2023denoising}, I2SB~\cite{liu20232}, and DBIM~\cite{zheng2024diffusion}.
For the DDBM, we consider two variants with different transition kernels: DDBM-VP and DDBM-VE~\cite{song2020score}.
We implement \emph{GoodDiffusion} with a UNet~\cite{ronneberger2015u}, the same architectures as in \cite{liu20232} for fair comparisons.
The Learnable Signature Network $g_{\bm\phi}$ for generating sample-specific signatures adopts a UNet++ model~\cite{zhou2018unet++}.

\begin{figure*}[!ht]
    \centering
    \includegraphics[width=0.9\linewidth]{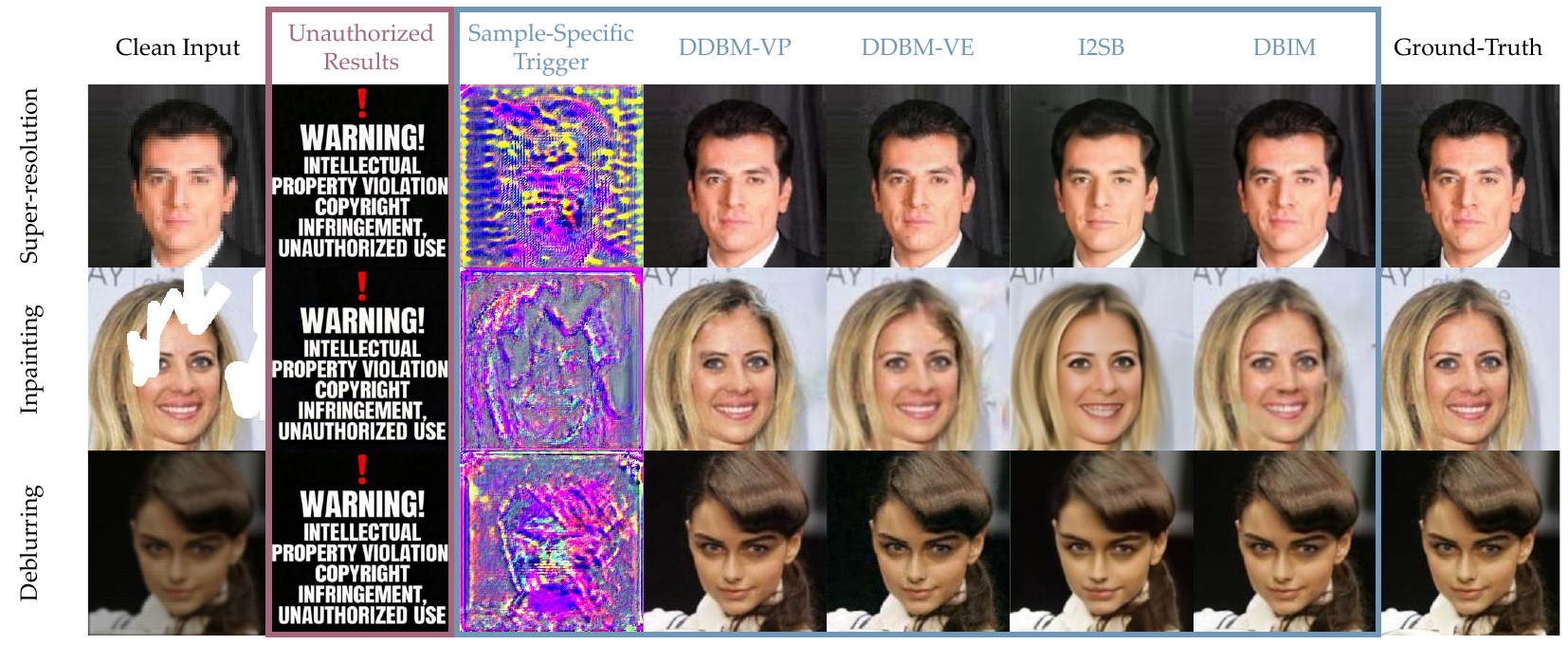}
    \caption{\textbf{Visualization of \emph{GoodDiffusion} outputs.} We show the results on CelebA for three I2I tasks: super-resolution, inpainting, and deblurring.
    The results demonstrate that \emph{GoodDiffusion} proactively prevents unauthorized usages by producing the predefined warning image, while generating high-quality target images with sample-specific signatures.}
    \label{fig-visualization}
\end{figure*}

\begin{figure*}[!ht]
    \centering
    \includegraphics[width=0.9\linewidth]{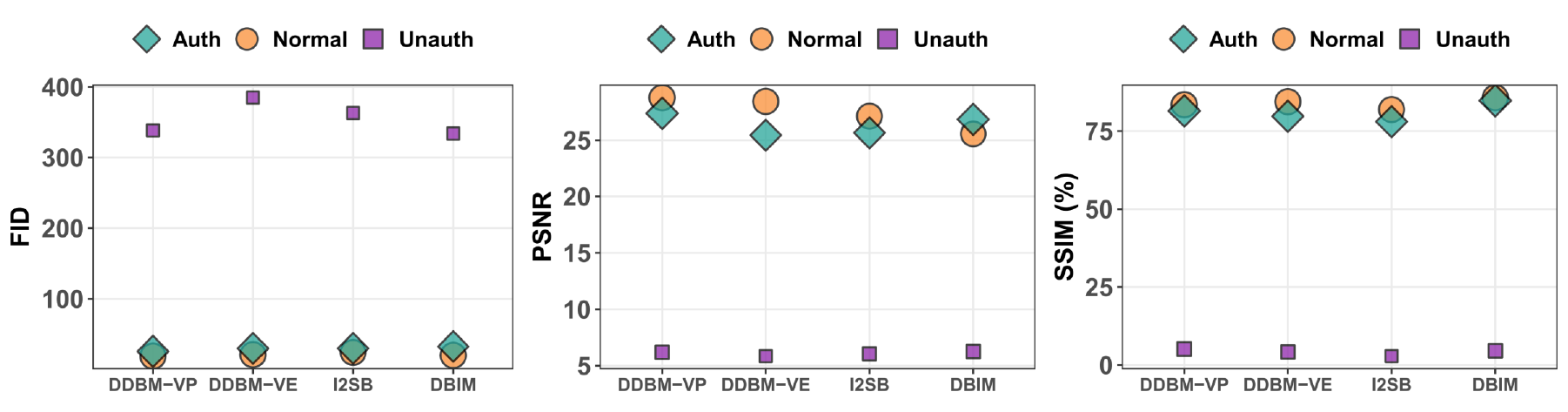}
    \caption{\textbf{ImageNet: Average generation quality across three I2I tasks.}
    It is obvious that \emph{GoodDiffusion} generates high-quality and similar-to-normal outputs with authorized inputs.
    Otherwise, the generation quality becomes significantly worse.
    }
    \label{fig-imagenet-auth}
\end{figure*}

\subsection{Evaluation Metrics}

We evaluate the performance of \emph{GoodDiffusion} from two perspectives: protection effectiveness against unauthorized model thieves and generation quality for authorized users.

\textcolor{warn-red}{\textbf{Protection Effectiveness.}} 
We simulate \textbf{the unauthorized behaviors by model thieves} by feeding clean inputs (i.e., without valid signatures) to the protected model, and evaluate if the model refuses to generate high-quality outputs.
We report the \textbf{Abuse Rate}, defined as the fraction of the failed protections over all unauthorized generations.
For each generation, the protection is considered failed if the MSE between the generated image and the target ground-truth image is below a certain threshold.
Therefore, the Abuse Rate can be calculated as: AR = (Number of unauthorized inputs generating high-quality outputs) / (Total number of unauthorized inputs).

\textcolor{auth-blue}{\textbf{Generation Quality.}}
This simulates \textbf{the authorized usages by legitimate users}, and evaluates the generation quality when valid signatures are provided.
We measure the \textbf{FID}~\cite{heusel2017gans}, \textbf{PSNR}, and \textbf{SSIM}~\cite{wang2004image} for authorized generations.
We also report the \textbf{Error Rate}, defined as the fraction of authorized inputs incorrectly mapped to the warning image.
The error is considered occurred if the MSE between the generated image and the warning image is lower than a predefined threshold.
Therefore, the Error Rate can be calculated as: ER = (Number of authorized inputs generating warning images) / (Total number of authorized inputs).

\subsection{Main Results}

\subsubsection{Visualization}
Fig.~\ref{fig-visualization} visualizes the outputs of \emph{GoodDiffusion} on CelebA for the three I2I tasks.
With unauthorized inputs (i.e., without valid signatures), 
\emph{GoodDiffusion} does not simply degrade the image quality, but produces the predefined warning image that is completely different from the ground-truth target images.
This demonstrates the advantages of our method that \textbf{completely} cuts off unauthorized usages at the generation stage.
In contrast, given valid signatures, \emph{GoodDiffusion} produces high-quality target images across all tasks and bridge models.
It is noteworthy that the sample-specific signatures vary across different inputs, as the signatures are similar to the original input images in structure and texture.
More visualization results are provided in Appendix~\ref{app:additional-visualizations}.

\subsubsection{CelebA}
In Tab.~\ref{tab:celeba-pe-gq-er-sr-deblur}, we compare \emph{GoodDiffusion} with the unprotected normal diffusion bridge baseline model on CelebA.
Due to space limitations, we only present the results of super-resolution and deblurring tasks, while the inpainting results are deferred to the Appendix~\ref{app:celeba-inpaint}.

For \textbf{Protection Effectiveness}, \emph{GoodDiffusion} achieves a small abuse rate of less than 0.06\% in all cases, demonstrating its strong protection against unauthorized usages.
For \textbf{Generation Quality}, the results show that \emph{GoodDiffusion} maintains satisfactory generation quality for authorized users.
Although there is a slight performance drop compared to the normal baseline, \emph{GoodDiffusion} still achieves good FID, PSNR, and SSIM values.
For example, in the super-resolution task with the DDBM-VP bridge model, \emph{GoodDiffusion} attains a competitive performance, with the SSIM even slightly surpassing the normal baseline.
In the meantime, the error rate is kept low, indicating that our protection rarely denies authorized requests by mistake.

\subsubsection{ImageNet}

We further evaluate \emph{GoodDiffusion} on ImageNet. Due to space limitations, we defer the detailed results to the Appendix~\ref{app:imagenet-results}. The overall results are illustrated in Fig.~\ref{fig-imagenet-auth}, which demonstrates the average performance of each DSM across the three I2I tasks.
The results show that \emph{GoodDiffusion} effectively cuts off unauthorized usages, while generating high-quality outputs for authorized users.




\subsection{Security of Signatures}\label{sec:ablation-secure}

\begin{figure}[t]
    \centering
    \includegraphics[width=\linewidth]{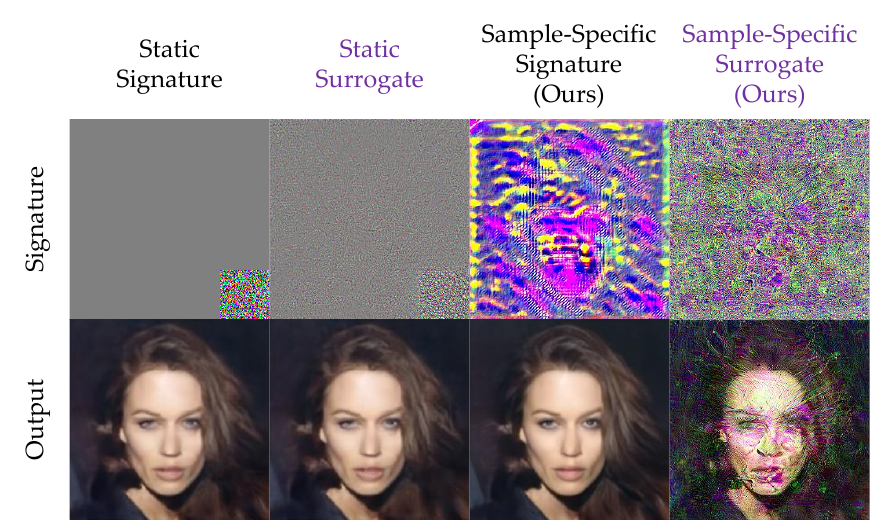}
    \caption{\textbf{Visualization of Signatures and Surrogate Perturbations.}
    The adversary successfully recovered the Gaussian noise pattern used in the bottom right corner of the static signature, but failed to recover a surrogate for the sample-specific signature.
    }
    \label{fig-signature-surrogate}
\end{figure}

\begin{table}[t]
\centering
\caption{\textbf{Static Signatures vs. Dynamic Signatures.}
We report image quality on three I2I tasks and the \textbf{Abuse Rate} (AR) for the protected I2SB model.
For each task, we compare a naive \textbf{static signature} (S) against our \textbf{dynamic (sample-specific) signature} (D).
Authorization/Unauthorization (Auth./Unauth.) are evaluated with/without valid signatures. Surrogate (Surr.) generates images with a recovered surrogate signature.
The \colorbox{table-blue2}{best unauthorized results} and the \colorbox{table-blue}{best surrogate results} are highlighted in \textbf{bold} and \underline{underline}.}
\label{tab:ablation-static-dynamic}
\renewcommand{\arraystretch}{1.12}
\begin{tabular}{@{}l|cccc@{}}
\toprule[1.5pt]
Setting
& BFID$\uparrow$ & BPSNR$\downarrow$ & \makecell{BSSIM\\(E-02)}$\downarrow$ & AR$\downarrow$ \\
\midrule

\rowcolor{taskbg}
\multicolumn{5}{@{}l}{\textbf{Super-Resolution}}\\
S-Auth.     & 25.61  & 30.92 & 86.80 & -- \\
S-Unauth.   & 374.10 & 5.87  & 4.90  & 0 \\
S-Surr.    & 56.97  & 29.65 & 81.05 & 98.52 \\
D-Auth. (Ours)     & 28.25  & 30.72 & 86.18 & -- \\
D-Unauth. (Ours)  & \cellcolor{table-blue2}\textbf{378.82} & \cellcolor{table-blue2}\textbf{5.86}  & \cellcolor{table-blue2}\textbf{4.83}  & 0 \\
D-Surr. (Ours)   & \cellcolor{table-blue}\underline{261.87} & \cellcolor{table-blue}\underline{18.28} & \cellcolor{table-blue}\underline{23.32} & 0 \\
\midrule

\rowcolor{taskbg}
\multicolumn{5}{@{}l}{\textbf{Inpainting}}\\
S-Auth.     & 23.47  & 21.69 & 76.40 & -- \\
S-Unauth.   & 368.15 & \cellcolor{table-blue2}\textbf{5.87}  & \cellcolor{table-blue2}\textbf{4.26}  & 0 \\
S-Surr.    & 86.16  & 19.54 & 68.28 & 69.16 \\
D-Auth. (Ours)     & 18.51  & 24.39 & 88.16 & -- \\
D-Unauth. (Ours)   & \cellcolor{table-blue2}\textbf{375.54} & 5.88  & 4.91  & 0 \\
D-Surr. (Ours)   & \cellcolor{table-blue}\underline{240.34} & \cellcolor{table-blue}\underline{14.77} & \cellcolor{table-blue}\underline{34.99} & 0 \\
\midrule

\rowcolor{taskbg}
\multicolumn{5}{@{}l}{\textbf{Deblurring}}\\
S-Auth.     & 24.75  & 30.68 & 86.07 & -- \\
S-Unauth.   & 376.25 & 5.88  & 4.82  & 0 \\
S-Surr.    & 65.95  & 25.50 & 65.30 & 99.97 \\
D-Auth. (Ours)     & 29.94  & 28.87 & 84.44 & -- \\
D-Unauth. (Ours)   & \cellcolor{table-blue2}\textbf{380.99} & \cellcolor{table-blue2}\textbf{5.86}  & \cellcolor{table-blue2}\textbf{4.80}  & 0 \\
D-Surr. (Ours)   & \cellcolor{table-blue}\underline{125.08} & \cellcolor{table-blue}\underline{20.52} & \cellcolor{table-blue}\underline{46.63} & 0.04 \\
\bottomrule[1.5pt]
\end{tabular}
\end{table}





To validate the security of the naive static signature and the sample-specific signature, we implement both schemes on the I2SB bridge model for three I2I tasks on CelebA.
For simplicity, we set the static signature as a standard Gaussian noise added to a small image patch, as illustrated in Fig.~\ref{fig-signature-surrogate}.

We simulate the white-box adversary introduced in Sec.~\ref{sec:vulnerability} to recover surrogate signatures for both signature schemes.
For the static signature, the recovered surrogate signature closely resembles the static pattern as illustrated in Fig.~\ref{fig-signature-surrogate}, indicating the vulnerability of the static signature design.
As for the sample-specific signature, the recovered surrogate signature appears to be an unstructured noise pattern, which suggests that the adversary fails to approximate a universal surrogate signature.

The quantitative results are summarized in Tab.~\ref{tab:ablation-static-dynamic}.
As we expect poor generation quality for unauthorized usages, we report bad FID (\textbf{BFID}), bad BPSNR (\textbf{BPSNR}), bad BSSIM (\textbf{BSSIM}), and Abuse Rate (\textbf{AR}) in this part.
For the naive static signature, despite its effectiveness in blocking unauthorized generations, once the adversary recovers the surrogate signature, the protection is completely broken with a high abuse rate.
In contrast, for the sample-specific signature, the generation quality remains poor even if the adversary attempts to recover a surrogate signature.

\section{Conclusion}

In this paper, we propose \emph{GoodDiffusion}, a proactive copyright protection method for diffusion models.
Motivated by the backdoor attack mechanism, \emph{GoodDiffusion} enables the model to generate high-quality outputs for authorized users solely when valid signatures are injected into the inputs.
Compared to existing protection methods, \emph{GoodDiffusion} completely blocks unauthorized usages at the generation stage.
As theoretical analysis reveals that naive static signatures are vulnerable against white-box adversaries, we design a learnable signature network that produces sample-specific signatures for each input to enhance the security of our method.
Extensive experiments on multiple datasets and I2I tasks validate the strong protection capability of \emph{GoodDiffusion} against unauthorized usages while maintaining satisfactory generation quality for authorized users.

\section*{Impact Statement}

The development of diffusion generative models raises increasing concerns about copyright protection, which is crucial for safeguarding the rights of model owners.
While the watermarking methods provide a basic level of protection, the passive nature of these methods cannot fully prevent unauthorized usage.
In this work, we take a step forward to proactively protect the copyright of diffusion models by producing a predefined warning image, which blocks unauthorized inferences at the generation stage.
As current diffusion models are becoming valuable intellectual properties, we believe that our work contributes positively to the community by addressing the important issue of copyright protection in generative AI.
We encourage further research in this area to develop more robust protection strategies against adversaries.





\section*{Acknowledgements}

This work was supported in part by National Natural Science Foundation of China: 62525212, U23B2051, 62236008, 62441232, 62521007, U21B2038, 62576332, 62502496, and 62502500,  in part by Youth Innovation Promotion Association CAS, in part by the Strategic Priority Research Program of the Chinese Academy of Sciences Grant No. XDB0680201, in part by the Beijing Major Science and Technology Project under Contract No. Z251100008125059, in part by Beijing Academy of Artificial Intelligence (BAAI), in part by the project ZR2025ZD01 supported by Shandong Provincial Natural Science Foundation, in part by the China National Postdoctoral Program for Innovative Talents under Grant BX20240384, in part by the Postdoctoral Fellowship Program of CPSF under Grant No. GZB20240729, in part by General Program of the Chinese Postdoctoral Science Foundation under Grant No. 2025M771558 and 2025M771492, in part by Beijing Natural Science Foundation under Grant No. L252144, and in part by the Young Elite Scientists Sponsorship Program of the Beijing High Innovation Plan.

\bibliography{example_paper}

@inproceedings{zhou2023denoising,
  title={Denoising diffusion bridge models},
  author={Zhou, Linqi and Lou, Aaron and Khanna, Samar and Ermon, Stefano},
  booktitle={International Conference on Learning Representations},
  volume={2024},
  pages={8160--8171},
  year={2024}
}

@inproceedings{liu20232,
  title={I$^2$SB: Image-to-Image Schr{\"o}dinger Bridge},
  author={Liu, Guan-Horng and Vahdat, Arash and Huang, De-An and Theodorou, Evangelos A and Nie, Weili and Anandkumar, Anima},
  booktitle={Proceedings of the 40th International Conference on Machine Learning},
  pages={22042--22062},
  year={2023}
}

@article{ho2020denoising,
  title={Denoising diffusion probabilistic models},
  author={Ho, Jonathan and Jain, Ajay and Abbeel, Pieter},
  journal={Advances in neural information processing systems},
  volume={33},
  pages={6840--6851},
  year={2020}
}

@inproceedings{hua2018reverse,
  title={Reverse engineering convolutional neural networks through side-channel information leaks},
  author={Hua, Weizhe and Zhang, Zhiru and Suh, G Edward},
  booktitle={Proceedings of the 55th Annual Design Automation Conference},
  pages={1--6},
  year={2018}
}

@article{vincent2011connection,
  title={A connection between score matching and denoising autoencoders},
  author={Vincent, Pascal},
  journal={Neural computation},
  volume={23},
  number={7},
  pages={1661--1674},
  year={2011},
  publisher={MIT Press}
}

@inproceedings{chou2023backdoor,
  title={How to backdoor diffusion models?},
  author={Chou, Sheng-Yen and Chen, Pin-Yu and Ho, Tsung-Yi},
  booktitle={Proceedings of the IEEE/CVF Conference on Computer Vision and Pattern Recognition},
  pages={4015--4024},
  year={2023}
}

@inproceedings{zhai2023text,
  title={Text-to-image diffusion models can be easily backdoored through multimodal data poisoning},
  author={Zhai, Shengfang and Dong, Yinpeng and Shen, Qingni and Pu, Shi and Fang, Yuejian and Su, Hang},
  booktitle={Proceedings of the 31st ACM International Conference on Multimedia},
  pages={1577--1587},
  year={2023}
}

@book{doob1984classical,
  title={Classical potential theory and its probabilistic counterpart},
  author={Doob, Joseph L and Doob, JI},
  volume={262},
  year={1984},
  publisher={Springer}
}

@inproceedings{song2020score,
  title={Score-based generative modeling through stochastic differential equations},
  author={Song, Yang and Sohl-Dickstein, Jascha and Kingma, Diederik P and Kumar, Abhishek and Ermon, Stefano and Poole, Ben},
  booktitle={International Conference on Learning Representations},
  year={2020}
}

@inproceedings{qinmixbridge,
  title={MixBridge: Heterogeneous Image-to-Image Backdoor Attack through Mixture of Schr{\"o}dinger Bridges},
  author={Qin, Shixi and Yang, Zhiyong and Bao, Shilong and Wang, Shi and Xu, Qianqian and Huang, Qingming},
  booktitle={Forty-second International Conference on Machine Learning},
  year={2025}
}

@article{dhariwal2021diffusion,
  title={Diffusion models beat gans on image synthesis},
  author={Dhariwal, Prafulla and Nichol, Alexander},
  journal={Advances in neural information processing systems},
  volume={34},
  pages={8780--8794},
  year={2021}
}

@inproceedings{nichol2021improved,
  title={Improved denoising diffusion probabilistic models},
  author={Nichol, Alexander Quinn and Dhariwal, Prafulla},
  booktitle={International conference on machine learning},
  pages={8162--8171},
  year={2021},
  organization={PMLR}
}

@article{song2020improved,
  title={Improved techniques for training score-based generative models},
  author={Song, Yang and Ermon, Stefano},
  journal={Advances in neural information processing systems},
  volume={33},
  pages={12438--12448},
  year={2020}
}

@article{karras2022elucidating,
  title={Elucidating the design space of diffusion-based generative models},
  author={Karras, Tero and Aittala, Miika and Aila, Timo and Laine, Samuli},
  journal={Advances in neural information processing systems},
  volume={35},
  pages={26565--26577},
  year={2022}
}

@inproceedings{xie2025roma,
  title={RoMa: A Robust Model Watermarking Scheme for Protecting IP in Diffusion Models},
  author={Xie, Yingsha and Min, Rui and Qin, Zeyu and Ma, Fei and Shen, Li and Yu, Fei and Cao, Xiaochun},
  booktitle={ICML 2025 Workshop on Reliable and Responsible Foundation Models},
  year={2025}
}

@article{zhao2023recipe,
  title={A recipe for watermarking diffusion models},
  author={Zhao, Yunqing and Pang, Tianyu and Du, Chao and Yang, Xiao and Cheung, Ngai-Man and Lin, Min},
  journal={arXiv preprint arXiv:2303.10137},
  year={2023}
}

@inproceedings{wang2025sleepermark,
  title={Sleepermark: Towards robust watermark against fine-tuning text-to-image diffusion models},
  author={Wang, Zilan and Guo, Junfeng and Zhu, Jiacheng and Li, Yiming and Huang, Heng and Chen, Muhao and Tu, Zhengzhong},
  booktitle={Proceedings of the Computer Vision and Pattern Recognition Conference},
  pages={8213--8224},
  year={2025}
}

@inproceedings{sun2021mind,
  title={Mind your weight (s): A large-scale study on insufficient machine learning model protection in mobile apps},
  author={Sun, Zhichuang and Sun, Ruimin and Lu, Long and Mislove, Alan},
  booktitle={30th USENIX security symposium (USENIX security 21)},
  pages={1955--1972},
  year={2021}
}

@inproceedings{zhou2023nnsplitter,
  title={NNSplitter: an active defense solution for DNN model via automated weight obfuscation},
  author={Zhou, Tong and Luo, Yukui and Ren, Shaolei and Xu, Xiaolin},
  booktitle={International Conference on Machine Learning},
  pages={42614--42624},
  year={2023},
  organization={PMLR}
}

@inproceedings{deng2024sophon,
  title={Sophon: Non-fine-tunable learning to restrain task transferability for pre-trained models},
  author={Deng, Jiangyi and Pang, Shengyuan and Chen, Yanjiao and Xia, Liangming and Bai, Yijie and Weng, Haiqin and Xu, Wenyuan},
  booktitle={2024 IEEE Symposium on Security and Privacy (SP)},
  pages={2553--2571},
  year={2024},
  organization={IEEE}
}

@inproceedings{sun2025obfuscation,
  title={Obfuscation for Deep Neural Networks Against Model Extraction: Attack Taxonomy and Defense Optimization},
  author={Sun, Yulian and Bonde, Vedant and Duan, Li and Li, Yong},
  booktitle={International Conference on Applied Cryptography and Network Security},
  pages={391--414},
  year={2025},
  organization={Springer}
}

@inproceedings{chung2022diffusion,
  title={Diffusion posterior sampling for general noisy inverse problems},
  author={Chung, Hyungjin and Kim, Jeongsol and Mccann, Michael T and Klasky, Marc L and Ye, Jong Chul},
  booktitle={The Eleventh International Conference on Learning Representations},
  year={2023}
}

@article{huang2024wavedm,
  title={Wavedm: Wavelet-based diffusion models for image restoration},
  author={Huang, Yi and Huang, Jiancheng and Liu, Jianzhuang and Yan, Mingfu and Dong, Yu and Lv, Jiaxi and Chen, Chaoqi and Chen, Shifeng},
  journal={IEEE Transactions on Multimedia},
  volume={26},
  pages={7058--7073},
  year={2024},
  publisher={IEEE}
}

@inproceedings{yu2024scaling,
  title={Scaling up to excellence: Practicing model scaling for photo-realistic image restoration in the wild},
  author={Yu, Fanghua and Gu, Jinjin and Li, Zheyuan and Hu, Jinfan and Kong, Xiangtao and Wang, Xintao and He, Jingwen and Qiao, Yu and Dong, Chao},
  booktitle={Proceedings of the IEEE/CVF conference on computer vision and pattern recognition},
  pages={25669--25680},
  year={2024}
}

@inproceedings{cheng2025effective,
  title={Effective diffusion transformer architecture for image super-resolution},
  author={Cheng, Kun and Yu, Lei and Tu, Zhijun and He, Xiao and Chen, Liyu and Guo, Yong and Zhu, Mingrui and Wang, Nannan and Gao, Xinbo and Hu, Jie},
  booktitle={Proceedings of the AAAI Conference on Artificial Intelligence},
  volume={39},
  pages={2455--2463},
  year={2025}
}

@article{de2021diffusion,
  title={Diffusion schr{\"o}dinger bridge with applications to score-based generative modeling},
  author={De Bortoli, Valentin and Thornton, James and Heng, Jeremy and Doucet, Arnaud},
  journal={Advances in neural information processing systems},
  volume={34},
  pages={17695--17709},
  year={2021}
}

@inproceedings{chen2021likelihood,
  title={Likelihood training of schr{\"o}dinger bridge using forward-backward sdes theory},
  author={Chen, Tianrong and Liu, Guan-Horng and Theodorou, Evangelos A},
  booktitle={International Conference on Learning Representations},
  year={2022}
}

@inproceedings{deng2024variational,
  title={Variational schr{\"o}dinger diffusion models},
  author={Deng, Wei and Luo, Weijian and Tan, Yixin and Bilo{\v{s}}f, Marin and Chen, Yu and Nevmyvaka, Yuriy and Chen, Ricky TQ},
  booktitle={Proceedings of the 41st International Conference on Machine Learning},
  pages={10506--10529},
  year={2024}
}

@article{shi2023diffusion,
  title={Diffusion schr{\"o}dinger bridge matching},
  author={Shi, Yuyang and De Bortoli, Valentin and Campbell, Andrew and Doucet, Arnaud},
  journal={Advances in Neural Information Processing Systems},
  volume={36},
  pages={62183--62223},
  year={2023}
}

@inproceedings{qiu2025finding,
  title={Finding Local Diffusion Schrodinger Bridge using Kolmogorov-Arnold Network},
  author={Qiu, Xingyu and Yang, Mengying and Ma, Xinghua and Li, Fanding and Liang, Dong and Luo, Gongning and Wang, Wei and Wang, Kuanquan and Li, Shuo},
  booktitle={Proceedings of the Computer Vision and Pattern Recognition Conference},
  pages={23227--23236},
  year={2025}
}

@inproceedings{zheng2024diffusion,
  title={Diffusion bridge implicit models},
  author={Zheng, Kaiwen and He, Guande and Chen, Jianfei and Bao, Fan and Zhu, Jun},
  booktitle={International Conference on Learning Representations},
  volume={2025},
  pages={81857--81884},
  year={2025}
}

@article{he2024consistency,
  title={Consistency diffusion bridge models},
  author={He, Guande and Zheng, Kaiwen and Chen, Jianfei and Bao, Fan and Zhu, Jun},
  journal={Advances in Neural Information Processing Systems},
  volume={37},
  pages={23516--23548},
  year={2024}
}

@article{wang2025implicit,
  title={Implicit Image-to-Image Schr{\"o}dinger Bridge for image restoration},
  author={Wang, Yuang and Yoon, Siyeop and Jin, Pengfei and Tivnan, Matthew and Song, Sifan and Chen, Zhennong and Hu, Rui and Zhang, Li and Li, Quanzheng and Chen, Zhiqiang and others},
  journal={Pattern Recognition},
  volume={165},
  pages={111627},
  year={2025},
  publisher={Elsevier}
}

@article{xue2022advparams,
  title={AdvParams: An active DNN intellectual property protection technique via adversarial perturbation based parameter encryption},
  author={Xue, Mingfu and Wu, Zhiyu and Zhang, Yushu and Wang, Jian and Liu, Weiqiang},
  journal={IEEE Transactions on Emerging Topics in Computing},
  volume={11},
  number={3},
  pages={664--678},
  year={2022},
  publisher={IEEE}
}

@article{lin2020chaotic,
  title={Chaotic weights: A novel approach to protect intellectual property of deep neural networks},
  author={Lin, Ning and Chen, Xiaoming and Lu, Hang and Li, Xiaowei},
  journal={IEEE Transactions on Computer-Aided Design of Integrated Circuits and Systems},
  volume={40},
  number={7},
  pages={1327--1339},
  year={2020},
  publisher={IEEE}
}

@inproceedings{mu2024encryip,
  title={EncryIP: A Practical Encryption-Based Framework for Model Intellectual Property Protection},
  author={Mu, Xin and Wang, Yu and Huang, Zhengan and Lai, Junzuo and Zhang, Yehong and Wang, Hui and Yu, Yue},
  booktitle={Proceedings of the AAAI Conference on Artificial Intelligence},
  volume={38},
  number={19},
  pages={21438--21445},
  year={2024}
}

@article{li2024securenet,
  title={SecureNet: Proactive intellectual property protection and model security defense for DNNs based on backdoor learning},
  author={Li, Peihao and Huang, Jie and Wu, Huaqing and Zhang, Zeping and Qi, Chunyang},
  journal={Neural Networks},
  volume={174},
  pages={106199},
  year={2024},
  publisher={Elsevier}
}

@article{li2025licensenet,
  title={LicenseNet: Proactively safeguarding intellectual property of AI models through model license},
  author={Li, Peihao and Huang, Jie and Zhang, Shuaishuai},
  journal={Journal of Systems Architecture},
  volume={159},
  pages={103330},
  year={2025},
  publisher={Elsevier}
}

@article{wen2023tree,
  title={Tree-rings watermarks: Invisible fingerprints for diffusion images},
  author={Wen, Yuxin and Kirchenbauer, John and Geiping, Jonas and Goldstein, Tom},
  journal={Advances in Neural Information Processing Systems},
  volume={36},
  pages={58047--58063},
  year={2023}
}

@inproceedings{yang2024gaussian,
  title={Gaussian shading: Provable performance-lossless image watermarking for diffusion models},
  author={Yang, Zijin and Zeng, Kai and Chen, Kejiang and Fang, Han and Zhang, Weiming and Yu, Nenghai},
  booktitle={Proceedings of the IEEE/CVF Conference on Computer Vision and Pattern Recognition},
  pages={12162--12171},
  year={2024}
}

@article{yang2024embedding,
  title={Embedding Watermarks in Diffusion Process for Model Intellectual Property Protection},
  author={Yang, Jijia and Peng, Sen and Jia, Xiaohua},
  journal={arXiv preprint arXiv:2410.22445},
  year={2024}
}

@inproceedings{peng2025intellectual,
  title={Intellectual property protection of diffusion models via the watermark diffusion process},
  author={Peng, Sen and Chen, Yufei and Wang, Cong and Jia, Xiaohua},
  booktitle={International Conference on Web Information Systems Engineering},
  pages={290--305},
  year={2025},
  organization={Springer}
}

@inproceedings{feng2024aqualora,
  title={AquaLoRA: toward white-box protection for customized stable diffusion models via watermark LoRA},
  author={Feng, Weitao and Zhou, Wenbo and He, Jiyan and Zhang, Jie and Wei, Tianyi and Li, Guanlin and Zhang, Tianwei and Zhang, Weiming and Yu, Nenghai},
  booktitle={Proceedings of the 41st International Conference on Machine Learning},
  pages={13423--13444},
  year={2024}
}

@inproceedings{min2024watermark,
  title={A watermark-conditioned diffusion model for ip protection},
  author={Min, Rui and Li, Sen and Chen, Hongyang and Cheng, Minhao},
  booktitle={European Conference on Computer Vision},
  pages={104--120},
  year={2024},
  organization={Springer}
}

@inproceedings{fernandez2023stable,
  title={The stable signature: Rooting watermarks in latent diffusion models},
  author={Fernandez, Pierre and Couairon, Guillaume and J{\'e}gou, Herv{\'e} and Douze, Matthijs and Furon, Teddy},
  booktitle={Proceedings of the IEEE/CVF International Conference on Computer Vision},
  pages={22466--22477},
  year={2023}
}

@book{rogers2000diffusions,
  title={Diffusions, Markov processes, and martingales},
  author={Rogers, L Chris G and Williams, David},
  volume={2},
  year={2000},
  publisher={Cambridge university press}
}

@inproceedings{song2020denoising,
  title={Denoising diffusion implicit models},
  author={Song, Jiaming and Meng, Chenlin and Ermon, Stefano},
  booktitle={International Conference on Learning Representations},
  year={2021}
}

@article{lu2022dpm,
  title={Dpm-solver: A fast ode solver for diffusion probabilistic model sampling in around 10 steps},
  author={Lu, Cheng and Zhou, Yuhao and Bao, Fan and Chen, Jianfei and Li, Chongxuan and Zhu, Jun},
  journal={Advances in neural information processing systems},
  volume={35},
  pages={5775--5787},
  year={2022}
}

@article{gai2025pcdiff,
  title={PCDiff: Proactive Control for Ownership Protection in Diffusion Models with Watermark Compatibility},
  author={Gai, Keke and Shen, Ziyue and Yu, Jing and Zhu, Liehuang and Wu, Qi},
  journal={arXiv preprint arXiv:2504.11774},
  year={2025}
}

@article{chen2024privacy,
  title={Privacy-preserving diffusion model using homomorphic encryption},
  author={Chen, Yaojian and Yan, Qiben},
  journal={arXiv preprint arXiv:2403.05794},
  year={2024}
}

@article{he2025private,
  title={Private Sampling of Latent Diffusion Models for Encrypted Prompt},
  author={He, Guanghui and Ren, Yanli and Cai, Xiaoqiu and Feng, Guorui and Zhang, Xinpeng},
  journal={IEEE Transactions on Circuits and Systems for Video Technology},
  year={2025},
  publisher={IEEE}
}

@article{yao2024risks,
  title={Risks when sharing lora fine-tuned diffusion model weights},
  author={Yao, Dixi},
  journal={arXiv preprint arXiv:2409.08482},
  year={2024}
}

@inproceedings{liu2015deep,
  title={Deep learning face attributes in the wild},
  author={Liu, Ziwei and Luo, Ping and Wang, Xiaogang and Tang, Xiaoou},
  booktitle={Proceedings of the IEEE international conference on computer vision},
  pages={3730--3738},
  year={2015}
}

@inproceedings{deng2009imagenet,
  title={Imagenet: A large-scale hierarchical image database},
  author={Deng, Jia and Dong, Wei and Socher, Richard and Li, Li-Jia and Li, Kai and Fei-Fei, Li},
  booktitle={2009 IEEE conference on computer vision and pattern recognition},
  pages={248--255},
  year={2009},
  organization={Ieee}
}

@inproceedings{saharia2022palette,
  title={Palette: Image-to-image diffusion models},
  author={Saharia, Chitwan and Chan, William and Chang, Huiwen and Lee, Chris and Ho, Jonathan and Salimans, Tim and Fleet, David and Norouzi, Mohammad},
  booktitle={ACM SIGGRAPH 2022 conference proceedings},
  pages={1--10},
  year={2022}
}

@inproceedings{ronneberger2015u,
  title={U-net: Convolutional networks for biomedical image segmentation},
  author={Ronneberger, Olaf and Fischer, Philipp and Brox, Thomas},
  booktitle={International Conference on Medical image computing and computer-assisted intervention},
  pages={234--241},
  year={2015},
  organization={Springer}
}

@inproceedings{zhou2018unet++,
  title={Unet++: A nested u-net architecture for medical image segmentation},
  author={Zhou, Zongwei and Rahman Siddiquee, Md Mahfuzur and Tajbakhsh, Nima and Liang, Jianming},
  booktitle={International workshop on deep learning in medical image analysis},
  pages={3--11},
  year={2018},
  organization={Springer}
}

@article{wang2004image,
  title={Image quality assessment: from error visibility to structural similarity},
  author={Wang, Zhou and Bovik, Alan C and Sheikh, Hamid R and Simoncelli, Eero P},
  journal={IEEE transactions on image processing},
  volume={13},
  number={4},
  pages={600--612},
  year={2004},
  publisher={IEEE}
}

@article{heusel2017gans,
  title={Gans trained by a two time-scale update rule converge to a local nash equilibrium},
  author={Heusel, Martin and Ramsauer, Hubert and Unterthiner, Thomas and Nessler, Bernhard and Hochreiter, Sepp},
  journal={Advances in neural information processing systems},
  volume={30},
  year={2017}
}

@article{carlsson2009topology,
  title={Topology and data},
  author={Carlsson, Gunnar},
  journal={Bulletin of the American Mathematical Society},
  volume={46},
  number={2},
  pages={255--308},
  year={2009}
}

@article{liu2025persguard,
  title={PersGuard: Preventing Malicious Personalization via Backdoor Attacks on Pre-trained Text-to-Image Diffusion Models},
  author={Liu, Xinwei and Jia, Xiaojun and Xun, Yuan and Zhang, Hua and Cao, Xiaochun},
  journal={arXiv preprint arXiv:2502.16167},
  year={2025}
}

@inproceedings{guo2025copyrightshield,
  title={CopyrightShield: Enhancing Diffusion Model Security Against Copyright Infringement Attacks},
  author={Guo, Zhixiang and Liang, Siyuan and Liu, Aishan and Tao, Dacheng},
  booktitle={Proceedings of the IEEE/CVF International Conference on Computer Vision},
  pages={19417--19426},
  year={2025}
}

@inproceedings{zhang2025patfinger,
  title={Patfinger: Prompt-adapted transferable fingerprinting against unauthorized multimodal dataset usage},
  author={Zhang, Wenyi and Jia, Ju and Jia, Xiaojun and Huang, Yihao and Li, Xinfeng and Wu, Cong and Wang, Lina},
  booktitle={Proceedings of the 48th International ACM SIGIR Conference on Research and Development in Information Retrieval},
  pages={403--413},
  year={2025}
}

@inproceedings{liu2022watermark,
  title={Watermark vaccine: Adversarial attacks to prevent watermark removal},
  author={Liu, Xinwei and Liu, Jian and Bai, Yang and Gu, Jindong and Chen, Tao and Jia, Xiaojun and Cao, Xiaochun},
  booktitle={European conference on computer vision},
  pages={1--17},
  year={2022},
  organization={Springer}
}

@inproceedings{chen2023universal,
  title={Universal watermark vaccine: Universal adversarial perturbations for watermark protection},
  author={Chen, Jianbo and Liu, Xinwei and Liang, Siyuan and Jia, Xiaojun and Xun, Yuan},
  booktitle={Proceedings of the IEEE/CVF Conference on Computer Vision and Pattern Recognition},
  pages={2322--2329},
  year={2023}
}

@inproceedings{chen2022copy,
  title={Copy, right? a testing framework for copyright protection of deep learning models},
  author={Chen, Jialuo and Wang, Jingyi and Peng, Tinglan and Sun, Youcheng and Cheng, Peng and Ji, Shouling and Ma, Xingjun and Li, Bo and Song, Dawn},
  booktitle={2022 IEEE symposium on security and privacy (SP)},
  pages={824--841},
  year={2022},
  organization={IEEE}
}

@inproceedings{zhang2018protecting,
  title={Protecting intellectual property of deep neural networks with watermarking},
  author={Zhang, Jialong and Gu, Zhongshu and Jang, Jiyong and Wu, Hui and Stoecklin, Marc Ph and Huang, Heqing and Molloy, Ian},
  booktitle={Proceedings of the 2018 on Asia conference on computer and communications security},
  pages={159--172},
  year={2018}
}

@inproceedings{tao2022better,
  title={Better trigger inversion optimization in backdoor scanning},
  author={Tao, Guanhong and Shen, Guangyu and Liu, Yingqi and An, Shengwei and Xu, Qiuling and Ma, Shiqing and Li, Pan and Zhang, Xiangyu},
  booktitle={Proceedings of the IEEE/CVF Conference on Computer Vision and Pattern Recognition},
  pages={13368--13378},
  year={2022}
}

@inproceedings{hao2024diff,
  title={Diff-Cleanse: Identifying and Mitigating Backdoor Attacks in Diffusion Models},
  author={Jiang, Hao and Xiao, Jin and Hu, Xiaoguang and Chen, Tianyou and Zhao, Jiajia},
  booktitle={2025 IEEE International Conference on Multimedia and Expo (ICME)},
  pages={1--6},
  year={2025},
  organization={IEEE}
}

@article{lai2025principles,
  title={The principles of diffusion models},
  author={Lai, Chieh-Hsin and Song, Yang and Kim, Dongjun and Mitsufuji, Yuki and Ermon, Stefano},
  journal={arXiv preprint arXiv:2510.21890},
  year={2025}
}

@book{bishop2023deep,
  title={Deep learning: Foundations and concepts},
  author={Bishop, Christopher M and Bishop, Hugh},
  year={2023},
  publisher={Springer Nature}
}

@inproceedings{liu2019variance,
  title={On the variance of the adaptive learning rate and beyond},
  author={Liu, Liyuan and Jiang, Haoming and He, Pengcheng and Chen, Weizhu and Liu, Xiaodong and Gao, Jianfeng and Han, Jiawei},
  booktitle={8th International Conference on Learning Representations, ICLR 2020},
  year={2020}
}

@inproceedings{loshchilov2017decoupled,
  title={Decoupled weight decay regularization},
  author={Loshchilov, Ilya and Hutter, Frank},
  booktitle={International Conference on Learning Representations},
  year={2019}
}

@inproceedings{xie2017aggregated,
  title={Aggregated residual transformations for deep neural networks},
  author={Xie, Saining and Girshick, Ross and Doll{\'a}r, Piotr and Tu, Zhuowen and He, Kaiming},
  booktitle={Proceedings of the IEEE conference on computer vision and pattern recognition},
  pages={1492--1500},
  year={2017}
}

@article{groll2025separation,
  title={Separation of duty in information security},
  author={Groll, Sebastian and Fuchs, Ludwig and Pernul, G{\"u}nther},
  journal={ACM Computing Surveys},
  volume={57},
  number={7},
  pages={1--35},
  year={2025},
  publisher={ACM New York, NY}
}

@misc{barker2007sp,
  title={Sp 800-57. recommendation for key management, part 1: General (revised)},
  author={Barker, Elaine B and Barker, William C and Burr, William E and Polk, W Timothy and Smid, Miles E},
  year={2007},
  publisher={National Institute of Standards \& Technology}
}

@inproceedings{wang2024eviledit,
  title={Eviledit: Backdooring text-to-image diffusion models in one second},
  author={Wang, Hao and Guo, Shangwei and He, Jialing and Chen, Kangjie and Zhang, Shudong and Zhang, Tianwei and Xiang, Tao},
  booktitle={Proceedings of the 32nd ACM International Conference on Multimedia},
  pages={3657--3665},
  year={2024}
}

@inproceedings{lin2014microsoft,
  title={Microsoft coco: Common objects in context},
  author={Lin, Tsung-Yi and Maire, Michael and Belongie, Serge and Hays, James and Perona, Pietro and Ramanan, Deva and Doll{\'a}r, Piotr and Zitnick, C Lawrence},
  booktitle={European conference on computer vision},
  pages={740--755},
  year={2014},
  organization={Springer}
}

@inproceedings{hessel2021clipscore,
  title={Clipscore: A reference-free evaluation metric for image captioning},
  author={Hessel, Jack and Holtzman, Ari and Forbes, Maxwell and Le Bras, Ronan and Choi, Yejin},
  booktitle={Proceedings of the 2021 conference on empirical methods in natural language processing},
  pages={7514--7528},
  year={2021}
}

@inproceedings{wang2024kill,
  title={Kill two birds with one stone: Rethinking data augmentation for deep long-tailed learning},
  author={Wang, Binwu and Wang, Pengkun and Xu, Wei and Wang, Xu and Zhang, Yudong and Wang, Kun and Wang, Yang},
  booktitle={the twelfth international conference on learning representations},
  year={2024}
}

@article{wang2024llm,
  title={Llm-autoda: Large language model-driven automatic data augmentation for long-tailed problems},
  author={Wang, Pengkun and Zhao, Zhe and Wen, HaiBin and Wang, Fanfu and Wang, Binwu and Zhang, Qingfu and Wang, Yang},
  journal={Advances in Neural Information Processing Systems},
  volume={37},
  pages={64915--64941},
  year={2024}
}

@inproceedings{zhao2024two,
  title={Two fists, one heart: Multi-objective optimization based strategy fusion for long-tailed learning},
  author={Zhao, Zhe and Wang, Pengkun and Wen, HaiBin and Xu, Wei and Lai, Song and Zhang, Qingfu and Wang, Yang},
  booktitle={Forty-first International Conference on Machine Learning},
  year={2024}
}

@inproceedings{zhaobalancing,
  title={Balancing Model Efficiency and Performance: Adaptive Pruner for Long-tailed Data},
  author={Zhao, Zhe and Wen, HaiBin and Wang, Pengkun and Wang, Zhenkun and Zhang, Qingfu and Wang, Yang and others},
  booktitle={Forty-second International Conference on Machine Learning},
  year={2025}
}

@article{zhao2026deciphering,
  title={Deciphering the Extremes: A Novel Approach for Pathological Long-tailed Recognition in Scientific Discovery},
  author={Zhao, Zhe and Wen, HaiBin and Liu, Xianfu and Mao, Rui and Wang, Pengkun and Yu, Liheng and Chen, Linjiang and An, Bo and Zhang, Qingfu and Wang, Yang},
  journal={Advances in Neural Information Processing Systems},
  volume={38},
  pages={139640--139662},
  year={2026}
}

@inproceedings{mao2023cross,
  title={Cross-entropy loss functions: Theoretical analysis and applications},
  author={Mao, Anqi and Mohri, Mehryar and Zhong, Yutao},
  booktitle={International conference on Machine learning},
  pages={23803--23828},
  year={2023},
  organization={pmlr}
}
\bibliographystyle{icml2026}

\newpage
\appendix
\onecolumn

\clearpage
\phantomsection
\begin{center}
    {\Large\bfseries Appendix Contents}
\end{center}
\vspace{1em}

\newcommand{\appendixcontentssection}[2]{%
    \noindent\hyperref[#2]{\textbf{\ref*{#2}\quad #1}}\dotfill\hyperref[#2]{\pageref*{#2}}\par
}
\newcommand{\appendixcontentssubsection}[2]{%
    \noindent\hspace*{1.5em}\hyperref[#2]{\ref*{#2}\quad #1}\dotfill\hyperref[#2]{\pageref*{#2}}\par
}

\appendixcontentssection{Proof}{app:proof}
\appendixcontentssection{Additional Experiments Results}{app:additional-experiments-results}
\appendixcontentssubsection{Details of Datasets}{app:details-of-datasets}
\appendixcontentssubsection{Implementation Details}{app:implementation-details}
\appendixcontentssubsection{Inpainting Results on CelebA}{app:celeba-inpaint}
\appendixcontentssubsection{Results on ImageNet}{app:imagenet-results}
\appendixcontentssubsection{Ablation Study on Signature Injection Strength}{app:ablation-injection-strength}
\appendixcontentssubsection{\emph{GoodDiffusion} against Fine-tuning Attack}{app:ablation-fine-tuning}
\appendixcontentssubsection{Computation Cost Analysis}{app:computation-cost}
\appendixcontentssubsection{Potential Adaptive Attacks}{app:adaptive-attacks}
\appendixcontentssubsection{Proactive Protection for Text-to-Image Diffusion Models}{app:proactive-protection-text-to-image}
\appendixcontentssubsection{Additional Visualization Results}{app:additional-visualizations}

\clearpage
\section{Proof}\label{app:proof}

\begin{lemma}[Posterior Distribution of Diffusion Bridge Model~\cite{zhou2023denoising}]\label{lem-posterior}
    For a diffusion bridge model with linear--Gaussian forward kernel, the posterior distribution $q(\bm x_t \mid \bm x_0, \bm x_1)$ is Gaussian and can be expressed as:
    \begin{equation}\label{eq-posterior}
        \begin{split}
            &q(\bm x_t \mid \bm x_0, \bm x_1) =\mathcal{N}(a_t\bm x_1 + b_t\bm x_0, c_t^2 I)\\
            &a_t = \frac{\alpha_t}{\alpha_1}\frac{\mathrm{SNR}_1}{\mathrm{SNR}_t},\quad b_t=\alpha_t\left(1-\frac{\mathrm{SNR}_1}{\mathrm{SNR}_t}\right), \quad c_t^2 = \sigma_t^2\left(1-\frac{\mathrm{SNR}_1}{\mathrm{SNR}_t}\right),\\
        \end{split}
    \end{equation}
    where $\mathrm{SNR}_t=\frac{\alpha_t^2}{\sigma_t^2}$ is the signal-to-noise ratio at time $t$.
    
\end{lemma}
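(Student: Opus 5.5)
The plan is to derive the bridge posterior by Bayes' rule from the unconditioned forward process, then complete the square in $\bm x_t$.

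First I fix the standing assumptions implicit in the lemma. The forward process is Markov, and its marginal kernel is $q(\bm x_t\mid \bm x_0)=\mathcal N(\alpha_t\bm x_0,\sigma_t^2 I)$. The signal-to-noise ratio $\mathrm{SNR}_t=\alpha_t^2/\sigma_t^2$ is decreasing in $t$, so $\mathrm{SNR}_1<\mathrm{SNR}_t$ for $t<1$; this will make all variances below positive.

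\textbf{Step 1 (transition kernel).} Linearity and the Markov property give the forward transition
\[
q(\bm x_1\mid \bm x_t)=\mathcal N\!\left(\tfrac{\alpha_1}{\alpha_t}\bm x_t,\ \sigma_{1|t}^2 I\right),\qquad \sigma_{1|t}^2=\sigma_1^2-\tfrac{\alpha_1^2}{\alpha_t^2}\sigma_t^2 .
\]
I will rewrite this variance as
\[
\sigma_{1|t}^2=\alpha_1^2\left(\tfrac{1}{\mathrm{SNR}_1}-\tfrac{1}{\mathrm{SNR}_t}\right),
\]
which is positive by monotonicity of the SNR.

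\textbf{Step 2 (Bayes' rule).} By the Markov property,
\[
q(\bm x_t\mid \bm x_0,\bm x_1)\propto q(\bm x_1\mid \bm x_t)\,q(\bm x_t\mid \bm x_0).
\]
Both factors are Gaussian and the mean of each is affine in $\bm x_t$. Hence the product is a Gaussian in $\bm x_t$, which proves the first claim of Lemma~\ref{lem-posterior}. Its precision is the sum of the two precisions, and its mean is the precision-weighted combination of the two means.

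\textbf{Step 3 (identify the coefficients).} Put $r=\mathrm{SNR}_1/\mathrm{SNR}_t\in(0,1)$.

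For the variance, the likelihood term contributes precision
\[
\frac{\alpha_1^2/\alpha_t^2}{\sigma_{1|t}^2}=\frac{r}{\sigma_t^2(1-r)}.
\]
Adding the prior precision $1/\sigma_t^2$ gives total precision $1/\big(\sigma_t^2(1-r)\big)$. Therefore $c_t^2=\sigma_t^2(1-r)$.

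For the mean, the coefficient of $\bm x_0$ is
\[
c_t^2\cdot\frac{\alpha_t}{\sigma_t^2}=\alpha_t(1-r)=b_t .
\]
The coefficient of $\bm x_1$ is
\[
c_t^2\cdot\frac{\alpha_1/\alpha_t}{\sigma_{1|t}^2}=\frac{\sigma_t^2\,\mathrm{SNR}_1}{\alpha_t\alpha_1}=\frac{\alpha_t}{\alpha_1}\,r=a_t .
\]
These match the expressions stated in Eq.~\ref{eq-posterior}.

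\textbf{Main obstacle.} The only delicate step is the algebra in Step 3. The plan is to express every quantity through $\mathrm{SNR}_t$, $\mathrm{SNR}_1$ and the ratio $r$ before simplifying, since mixing raw $\alpha$'s and $\sigma$'s easily produces sign or ratio errors. The calculations themselves are routine Gaussian conjugacy and need no result beyond the linear--Gaussian kernel.
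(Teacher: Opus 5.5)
Your derivation is correct. The paper gives no proof of its own and imports this lemma from DDBM~\cite{zhou2023denoising}, where it is obtained the same way you do it: write $q(\bm x_t\mid\bm x_0,\bm x_1)\propto q(\bm x_1\mid\bm x_t)\,q(\bm x_t\mid\bm x_0)$ with transition $\mathcal N\bigl(\tfrac{\alpha_1}{\alpha_t}\bm x_t,\sigma_{1|t}^2 I\bigr)$ and complete the square. One cosmetic point: $\mathrm{SNR}_1<\mathrm{SNR}_t$ need not be assumed, since it follows from $\sigma_{1|t}^2>0$, and the endpoints $t\in\{0,1\}$ are the degenerate point-mass case of the same formula.
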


\begin{assumption}\label{assump-linear-corruption}
    For the Image-to-Image generation tasks, given an input-target image pair $(\bm x_0, \bm x_1)$, there exists a linear deterministic corruption operator $\bm{A}$ such that $\bm x_0 = \bm{A}\bm x_1$, where $\bm A\in \mathbb{R}^{d\times d}$ is invertible.
\end{assumption}

\begin{assumption}\label{assump-gaussian-distribution}
    The corrupted input image $\bm x_0$ and the target image $\bm x_1$ follow Gaussian distributions: $\bm x_0 \sim \mathcal{N}(\bm \mu_0, \bm \Sigma_0)$ and $\bm x_1 \sim \mathcal{N}(\bm \mu_1, \bm \Sigma_1)$.
\end{assumption}

\begin{assumption}\label{assump-score-matching}
    A diffusion bridge model $s_{\bm \theta}(\bm{x}_t,t)$ trained on Eq.~\ref{eq-dsm} can perfectly match the score function of the true diffusion process: $s_{\bm \theta}(\bm{x}_t,t) = \mathbb{E}_{\bm x_0 \sim p(\bm x_0 \mid \bm x_t)} \left[ \nabla_{\bm x_t} \log p(\bm x_t \mid \bm x_0) \right] = \nabla_{\bm x_t} \log p(\bm x_t)$ for almost all $\bm x_t\sim p(\bm x_t)$ and $t\in [0,1]$~\cite{lai2025principles}.
\end{assumption}

\textbf{Theorem~\ref{thm:encryption-recovery}} (White-Box Signature Recovery). 
\textit{To bypass the protection of the \emph{GoodDiffusion} with a static signature $\bm k$,
    one can treat a whole-image perturbation $\bm N\in\mathbb{R}^{H\times W}$ as the surrogate attack variable.
    After the optimization, the score function with the surrogate signature $s_{\bm \theta}(\hat{\bm x}_t,t)$ perfectly matches the score function $s_{\bm \theta}(\tilde{\bm x}_t, t)$ with the true signature, thus the recovered surrogate signature $\bm N^*$ approximates the true signature $\bm k$.}

\begin{proof}
    According to the Assumption~\ref{assump-score-matching}, the well-trained model $s_{\bm \theta}$ can perfectly match the score function of the true diffusion process.
    Thus, we have:
    \begin{equation}
        \begin{split}
            &s_{\bm \theta}(\tilde{\bm x}_t, t) = \nabla_{\bm x_t} \log p(\tilde{\bm x}_t),\\
            &s_{\bm \theta}(\hat{\bm x}_t, t) = \nabla_{\bm x_t} \log p(\hat{\bm x}_t).\\
        \end{split}
    \end{equation}


    Then, for the optimal surrogate perturbation $\bm N^*$, to match the score function, we obtain:
    \begin{equation}\label{eq-optimal-surrogate}
        \nabla_{\bm x_t}\log p(\tilde{\bm x}_t) = \nabla_{\bm x_t}\log p(\hat{\bm x}_t).
    \end{equation}

    According to the Lemma~\ref{lem-posterior}, the intermediate ${\bm x}_t$ can be derived given paired images $(\bm x_0, \bm x_1)$:
    \begin{equation}
        \begin{split}
            p(\bm x_t\mid \bm x_0, \bm x_1) &= \mathcal{N}(a_t\bm x_1 + b_t\bm x_0, c_t^2 I).\\
        \end{split}
    \end{equation}

    Under the Assumption~\ref{assump-linear-corruption}, we can further obtain:
    \begin{equation}
        \begin{split}
            p(\bm x_t\mid \bm x_1) &= \mathcal{N}(a_t\bm x_1 + b_t\bm A\bm x_1, c_t^2 I) = \mathcal{N}((a_t I + b_t\bm A)\bm x_1, c_t^2 I).\\
        \end{split}
    \end{equation}

    As we assume that $p(\bm x_1)$ follows a Gaussian distribution $p(\bm x_1)=\mathcal{N}(\bm \mu_1, \bm \Sigma_1)$ (Assumption~\ref{assump-gaussian-distribution}), we can derive the marginal distribution of $\bm x_t$ as~\cite{bishop2023deep}:
    \begin{equation}
        \begin{split}
            p(\bm x_t) &= \int p(\bm x_t\mid \bm x_1) p(\bm x_1) d\bm x_1\\
            &= \mathcal{N}((a_t I + b_t\bm A)\bm \mu_1, (a_t I + b_t\bm A)\Sigma_1(a_t I + b_t\bm A)^\top + c_t^2 I).\\
        \end{split}
    \end{equation}

    Let
    \begin{equation}
        \begin{split}
            &\bm M_1 = a_t I + b_t\bm A,\\
            &\bm M_2 = (a_t I + b_t\bm A)\Sigma_1(a_t I + b_t\bm A)^\top + c_t^2 I.\\
        \end{split}
    \end{equation}
    Thus, we have:
    \begin{equation}
        \begin{split}
            \nabla_{\bm x_t}\log p(\bm x_t) &= -\bm M_2^{-1}(\bm x_t - \bm M_1 \bm \mu_1).\\
        \end{split}
    \end{equation}

    In the case of a naive static signature, we have $\tilde{\bm x}_1 = \bm x_1 + \bm k$ and $\hat{\bm x}_1 = \bm x_1 + \bm N$.
    Therefore, we can derive:
    \begin{equation}\label{eq-score-static}
        \begin{split}
            &\nabla_{\bm x_t}\log p(\tilde{\bm x}_t) = -\bm M_2^{-1}(\bm \tilde{x}_t - \bm M_1(\bm \mu_1 + \bm k)),\\
            &\nabla_{\bm x_t}\log p(\hat{\bm x}_t) = -\bm M_2^{-1}(\bm \hat{x}_t - \bm M_1(\bm \mu_1 + \bm N)).\\
        \end{split}
    \end{equation}

    As $\bm M_2$ is the covariance matrix of $p(\bm x_t)$, it is positive definite and invertible.
    Therefore, according to Eq.~\ref{eq-optimal-surrogate}, we have:
    \begin{equation}\label{eq-surrogate-equality}
        \begin{split}
            &\bm \tilde{x}_t - \bm M_1(\bm \mu_1 + \bm k) = \bm \hat{x}_t - \bm M_1(\bm \mu_1 + \bm N).\\
        \end{split}
    \end{equation}

    According to the reparameterization of $\hat{\bm x}_t$ and $\tilde{\bm x}_t$ based on Lemma~\ref{lem-posterior}, we have:
    \begin{equation}
        \begin{split}
            &\tilde{\bm x}_t = a_t(\bm x_1 + \bm k) + b_t\bm x_0 + \bm \epsilon,\\
            &\hat{\bm x}_t = a_t(\bm x_1 + \bm N) + b_t\bm x_0 + \bm \epsilon',\\
            &\bm \epsilon, \bm \epsilon' \sim \mathcal{N}(0, c_t^2 I).\\
        \end{split}
    \end{equation}

    Thus, Eq.~\ref{eq-surrogate-equality} can be further derived as:
    \begin{equation}
        \begin{split}
            &a_t(\bm x_1 + \bm k) + b_t\bm x_0 + \bm \epsilon - \bm M_1(\bm \mu_1 + \bm k) - (a_t(\bm x_1 + \bm N) + b_t\bm x_0 + \bm \epsilon' - \bm M_1(\bm \mu_1 + \bm N))\\
            &= a_t\bm k - \bm M_1\bm k - a_t\bm N + \bm M_1\bm N + \bm \epsilon - \bm \epsilon'\\
            &= (a_t I - \bm M_1)(\bm k - \bm N) + (\bm \epsilon - \bm \epsilon') = 0.\\
        \end{split}
    \end{equation}
    In other words, we have:
    \begin{equation}
        \begin{split}
            -b_t\bm A(\bm N - \bm k) = \bm \epsilon - \bm \epsilon'.\\
        \end{split}
    \end{equation}
    This formula should hold for arbitrary $\bm \epsilon$ and $\bm \epsilon'$.
    Thus, we take the expectation on both sides:
    \begin{equation}
        \begin{split}
            &\mathbb{E}[-b_t\bm A(\bm N - \bm k)] = \mathbb{E}[\bm \epsilon - \bm \epsilon'] = 0.\\
        \end{split}
    \end{equation}

    Moreover, since $\bm A$ is invertible (Assumption~\ref{assump-linear-corruption}), we can further derive:
    \begin{equation}
        \begin{split}
            &\bm N = \bm k,
        \end{split}
    \end{equation}
    indicating that the optimal surrogate perturbation $\bm N^*$ approximates the true signature $\bm k$.

\end{proof}

\section{Additional Experiments Results}\label{app:additional-experiments-results}

\subsection{Details of Datasets}\label{app:details-of-datasets}
We conduct experiments on two widely-used datasets: CelebA~\cite{liu2015deep} and ImageNet~\cite{deng2009imagenet}.
\textbf{CelebA} contains over 200k celebrity images with rich annotations.
\textbf{ImageNet} is a large-scale dataset with more than 1 million images across a wide variety of categories.
For both datasets, we resize all images to $256\times256$ resolution for training and evaluation.

\subsection{Implementation Details}\label{app:implementation-details}
We implement our \emph{GoodDiffusion} method based on the publicly available codebases.
The proposed \emph{GoodDiffusion} includes two main components: the diffusion bridge model and the learnable signature network.

For the diffusion bridge models, we consider four representative architectures: DDBM-VP, DDBM-VE, I2SB, and DBIM.
We utilize the official implementations for these models.
The DDBM-VP and DDBM-VE~\cite{zhou2023denoising} and DBIM~\cite{zheng2024diffusion} models are trained for 200k iterations with a batch size of 2 paired images.
We set the learning rate to $1e-4$ and use the RAdam optimizer~\cite{liu2019variance}.
The model of I2SB~\cite{liu20232} is trained for 3000 iterations with a batch size of 256 paired samples, using the AdamW optimizer~\cite{loshchilov2017decoupled} with a learning rate of $5e-5$.

As for the learnable signature network, we adopt a UNet++ architecture~\cite{zhou2018unet++} to generate the signatures, which takes the raw image as input and outputs a signature of the same size.
The encoder of the UNet++ is a pretrained ResNeXt backbone~\cite{xie2017aggregated}, while the decoder is trained from scratch.
As introduced in Sec.~\ref{sec:sample-specific}, the learnable signature network is jointly trained with the diffusion bridge model.
We set $\pi_k=0.5$ in Eq.~\ref{eq-total-loss-sample-specific} to treat the authorized and unauthorized training samples equally.
The signature injection strength $\gamma$ is set to 0.9 for all experiments.

For the signature recovery attack in Sec.~\ref{sec:ablation-secure}, we fix the weights of the diffusion bridge model and initialize the surrogate perturbation $\bm N$ with a standard Gaussian noise.
As we assume the adversary has limited computational resources in Sec.~\ref{sec:threat_model}, we select 10k paired samples from the training set, and optimize the surrogate perturbation $\bm N$ for 100 iterations using the AdamW optimizer with a learning rate of $1e-2$ and a batch size of 256.

The training process is conducted on NVIDIA RTX3090 GPUs with 24GB of memory.
We apply the model parallelism technique to distribute the diffusion bridge model and the signature network on different GPUs.

\begin{table*}[!t]
\centering
\caption{\textbf{CelebA Inpainting (Appendix): \textcolor{warn-red}{Protection Effectiveness (PE)} vs \textcolor{auth-blue}{Generation Quality (GQ).}}
We report \textbf{N} (Normal) and \textbf{GD} (GoodDiffusion) for four diffusion bridge models on the Inpainting task.
\textbf{PE} is measured by \textbf{Abuse Rate} (AR). \textbf{Error Rate} (ER) denotes the fraction of authorized inputs incorrectly mapped to the warning image.}
\label{tab:celeba-pe-gq-er-inpaint}

\setlength{\tabcolsep}{4pt}
\renewcommand{\arraystretch}{1.15}

\begin{tabular*}{\textwidth}{@{\extracolsep{\fill}}ll|ccccc@{}}
\toprule[1.5pt]
Model & Setting &
\textcolor{warn-red}{Abuse Rate}$\downarrow$ & \textcolor{auth-blue}{FID}$\downarrow$ & \textcolor{auth-blue}{PSNR}$\uparrow$ & \makecell{\textcolor{auth-blue}{SSIM}\\\textcolor{auth-blue}{(E-02)}}$\uparrow$ & \textcolor{auth-blue}{Error Rate}$\downarrow$ \\
\midrule

\multirow{2}{*}{DDBM-VP}
& Unprotected  & 100 & 23.92 & 24.17 & 86.31 & -- \\
& \textit{GoodDiffusion} & 0   & 34.32 & 22.67 & 85.44 & 0 \\
\midrule

\multirow{2}{*}{DDBM-VE}
& Unprotected  & 100 & 28.46 & 24.10 & 86.72 & -- \\
& \textit{GoodDiffusion} & 0   & 31.20 & 23.33 & 84.90 & 0 \\
\midrule

\multirow{2}{*}{I2SB}
& Unprotected  & 100 & 17.39 & 25.15 & 88.54 & -- \\
& \textit{GoodDiffusion} & 0   & 18.51 & 24.39 & 88.16 & 0.06 \\
\midrule

\multirow{2}{*}{DBIM}
& Unprotected  & 100 & 13.50 & 24.24 & 88.65 & -- \\
& \textit{GoodDiffusion} & 0   & 29.16 & 22.07 & 85.80 & 0 \\

\bottomrule[1.5pt]
\end{tabular*}
\end{table*}

\subsection{Inpainting Results on CelebA}\label{app:celeba-inpaint}

We present the detailed results of \textit{GoodDiffusion} on CelebA Inpainting in Table~\ref{tab:celeba-pe-gq-er-inpaint}.
Similar to the observations on super-resolution and deblurring tasks, \textit{GoodDiffusion} achieves a low \textbf{Abuse Rate} (AR) for unauthorized inputs, while generating high-quality images for authorized inputs with minimal \textbf{Error Rate} (ER).

\subsection{Results on ImageNet}\label{app:imagenet-results}
We present the detailed results of \textcolor{warn-red}{Protection Effectiveness (PE)} and \textcolor{auth-blue}{Generation Quality (GQ)} on ImageNet in Table~\ref{tab:imagenet-pe-gq-er}.

For the \textbf{Protection Effectiveness}, we observe that \emph{GoodDiffusion} consistently achieves an \textbf{Abuse Rate (AR)} of 0\% across all bridge models and tasks.
This shows that our method effectively prevents unauthorized usage.
For the \textbf{Generation Quality}, we find that \emph{GoodDiffusion} maintains competitive performance compared to the normal (unprotected) bridge models.
While there is a slight increase in \textbf{FID} scores for some tasks, the \textbf{PSNR} and \textbf{SSIM} metrics remain relatively stable.
Moreover, the \textbf{Error Rate} is kept very low, indicating that the authorized generations are not affected by the protection mechanism.

\subsection{Ablation Study on Signature Injection Strength}\label{app:ablation-injection-strength}
We conduct an ablation study on the signature injection strength $\gamma$ in Eq.~\ref{eq-sample-specific-injection} to analyze its impact on the protection effectiveness and generation quality.
We vary $\gamma$ from 0.1 to 0.99 and evaluate the performance on the CelebA super-resolution task with the I2SB bridge model in Table~\ref{tab:gamma_results}.

The results show that as $\gamma$ decreases, the quality of the generated images degrades, which aligns with our intuition that a smaller $\gamma$ leads to a stronger signature injection, thus ruining the information in the input images. However, if $\gamma$ is too large (e.g., $\gamma = 0.99$), the model is hard to distinguish the authorized and unauthorized inputs as the signature injection is too weak, thus leading to a high AR/ER and affecting the generation quality. Overall, to achieve a good generation quality, we choose $\gamma = 0.9$ in our experiments.

\begin{table}[t]
\centering
\begin{tabular}{lccccc}
\toprule[1.5pt]
 & \textcolor{warn-red}{Abuse Rate}$\downarrow$ & \textcolor{auth-blue}{FID}$\downarrow$ & \textcolor{auth-blue}{PSNR}$\uparrow$ & \makecell{\textcolor{auth-blue}{SSIM}\\\textcolor{auth-blue}{(E-02)}}$\uparrow$ & \textcolor{auth-blue}{Error Rate}$\downarrow$ \\
\midrule
Unprotected & 100 & 22.05 & 32.48 & 89.93 & - \\
\midrule
$\gamma=0.99$ & 15.88 & 32.93 & 28.38 & 79.48 & 8.44 \\
$\gamma=0.9$ & 0 & \textbf{28.25} & \textbf{30.72} & \textbf{86.18} & 0.25 \\
$\gamma=0.7$ & 0 & 33.65 & 27.18 & 79.93 & 0 \\
$\gamma=0.5$ & 0 & 37.24 & 25.66 & 75.88 & 0 \\
$\gamma=0.3$ & 0 & 42.91 & 23.41 & 70.52 & 0 \\
$\gamma=0.1$ & 0 & 57.38 & 19.61 & 61.83 & 0 \\
\bottomrule[1.5pt]
\end{tabular}
\caption{Performance comparison under different $\gamma$ values.}
\label{tab:gamma_results}
\end{table}

\subsection{\emph{GoodDiffusion} against Fine-tuning Attack}\label{app:ablation-fine-tuning}

To evaluate the robustness of \emph{GoodDiffusion} against fine-tuning attacks, we fine-tune a well-trained \emph{GoodDiffusion} model for 200 steps with a batch size of 256 image pairs.
For comparison, we also train a randomly initialized model with the same training settings.

The results are shown in Table~\ref{tab:fine-tuning-results}.
We observe that the protection of \emph{GoodDiffusion} becomes ineffective after fine-tuning for 50 steps, as the AR reaches 100\%.
However, the model trained from scratch always performs better than the fine-tuned \emph{GoodDiffusion} model in terms of generation quality, which indicates that the fine-tuning process does not fully recover the performance of the original unprotected model.

Overall, we assume that \textbf{the infringers may not have the resources to fine-tune the diffusion model.} Even if they do, the performance of the fine-tuned model is still worse than a model trained from scratch. Thus, \textbf{the infringers do not have to steal the model but train a new one from scratch.}

\begin{table}[t]
\centering
\begin{tabular}{c l c c c c}
\toprule[1.5pt]
\textbf{Step} & \textbf{Model} & \textcolor{warn-red}{Abuse Rate}$\downarrow$ & \textcolor{auth-blue}{FID}$\downarrow$ & \textcolor{auth-blue}{PSNR}$\uparrow$ & \makecell{\textcolor{auth-blue}{SSIM}\\\textcolor{auth-blue}{(E-02)}}$\uparrow$ \\
\midrule
0   & GoodDiffusion         & 0   & 378.82 & 5.86  & 4.83  \\
\midrule
1   & Trained from scratch  & 100 & 97.83  & 23.57 & 43.19 \\
1   & GoodDiffusion         & 0   & 375.49 & 5.91  & 5.21  \\
\midrule
50  & Trained from scratch  & 100 & 22.44  & 31.33 & 88.02 \\
50  & GoodDiffusion         & 100 & 75.17  & 26.27 & 79.92 \\
\midrule
100 & Trained from scratch  & 100 & 24.58  & 31.93 & 88.95 \\
100 & GoodDiffusion         & 100 & 45.15  & 30.33 & 85.84 \\
\midrule
150 & Trained from scratch  & 100 & 25.09  & 32.12 & 89.27 \\
150 & GoodDiffusion         & 100 & 27.37  & 31.30 & 87.40 \\
\midrule
200 & Trained from scratch  & 100 & 25.19  & 32.21 & 89.44 \\
200 & GoodDiffusion         & 100 & 22.50  & 31.72 & 86.18 \\
\bottomrule[1.5pt]
\end{tabular}
\caption{Performance comparison between training from scratch and fine-tuning with \emph{GoodDiffusion} on the CelebA Super-Resolution task with the I2SB bridge model. We report the \textbf{Abuse Rate (AR)}, \textbf{FID}, \textbf{PSNR}, and \textbf{SSIM} at different training steps.}
\label{tab:fine-tuning-results}
\end{table}

\subsection{Computation Cost Analysis}\label{app:computation-cost}
We analyze the computation cost of \emph{GoodDiffusion} in terms of training time and inference time shown in Table~\ref{tab:efficiency}.
The results are obtained by running the model on an RTX 3090 GPU with a batch size of 16 and a resolution of 256x256 for 1 inference step.

The results show that the LSN does not bring significant computational overhead. In addition, as image generation requires a number of steps for the diffusion model, but only one inference for the LSN, the additional computation and time cost of LSN are negligible.

\begin{table}[t]
\centering
\begin{tabular}{l c c}
\toprule[1.5pt]
\textbf{Model} & \textbf{GPU Memory (MB)} & \textbf{Inference Time (ms)} \\
\midrule
DDBM-VP & 7409.07  & 1151.35 \\
DDBM-VE & 7407.07  & 1145.27 \\
I2SB    & 10981.84 & 933.07  \\
DDIM    & 7407.07  & 1189.45 \\
\midrule
LSN     & 4401.11  & 206.98  \\
\bottomrule[1.5pt]
\end{tabular}
\caption{\textbf{Efficiency Analysis.} We report the GPU memory usage and inference time of the diffusion bridge models and the learnable signature network (LSN) on the CelebA Super-Resolution task. The results are measured on an NVIDIA RTX3090 GPU with a batch size of 16 and an image resolution of $256\times256$ for 1 inference step.}
\label{tab:efficiency}
\end{table}

\subsection{Potential Adaptive Attacks}\label{app:adaptive-attacks}
We discuss potential adaptive attacks that infringers may attempt to bypass the protection of \emph{GoodDiffusion}.
Specifically, we randomly initialize a surrogate LSN $g_{\phi'}$ and freeze the parameters of a well-trained I2SB \emph{GoodDiffusion} model $s_\theta$.
A straightforward attack is to optimize the surrogate LSN to minimize the following recovery loss, modified from Eq.~\ref{eq-recovery-loss}:

\begin{equation}
    \begin{split}
        \mathcal{L}(\phi') &= \mathbb{E}_{(\bm x_1, \bm x_0)\sim D_a, t\sim\mathcal{U}(0,1)} \left[ \lambda(t) \left\| s_{\bm \theta}(\hat{\bm{x}}_t,t) - s^*(\hat{\bm{x}}_t,t) \right\|_2^2 \right],\\
        \hat{\bm x}_1 &= \gamma\bm x_1 + (1-\gamma)g_{\phi'}(\bm x_1),\\
        \hat{\bm x}_t &\sim q(\bm x_t\mid \hat{\bm x}_1, \bm x_0).
    \end{split}
\end{equation}

We train the surrogate LSN $g_{\phi'}$ for 200 steps with a batch size of 256, and compare the performance of the model with the original LSN (O-LSN) and the surrogate LSN (S-LSN).
As shown in Table~\ref{tab:adaptive-attack-results}, the surrogate LSN fails to substitute the original LSN to bypass the protection, as the model can only generate the warning images.

The reason for the failure of the surrogate LSN may be that the GoodDiffusion model is trained to exhibit a threshold behavior: unless the transformed input is close enough to the authorized manifold, the model will generate warning images.
\textbf{Such behavior is hard to learn without the supervision of the signature service.}

\begin{table}[t]
\centering
\begin{tabular}{l c c c c}
\toprule[1.5pt]
\textbf{Model} & \textcolor{auth-blue}{FID}$\downarrow$ & \textcolor{auth-blue}{PSNR}$\uparrow$ & \makecell{\textcolor{auth-blue}{SSIM}\\\textcolor{auth-blue}{(E-02)}}$\uparrow$ & \textcolor{auth-blue}{Error Rate}$\downarrow$ \\
\midrule
O-LSN & 28.25 & 30.72 & 86.18 & 0.25 \\
S-LSN & 376.91 & 6.04 & 5.42 & 100 \\
\bottomrule[1.5pt]
\end{tabular}
\caption{\textbf{Performance of Adaptive Surrogate Attack.} We compare the performance of the original LSN (O-LSN) and the surrogate LSN (S-LSN) on the CelebA Super-Resolution task with the I2SB bridge model. The results show that the surrogate LSN fails to bypass the protection.}
\label{tab:adaptive-attack-results}
\end{table}

\subsection{Proactive Protection for Text-to-Image Diffusion Models}\label{app:proactive-protection-text-to-image}

In this paper, the proposed \emph{GoodDiffusion} method is designed for diffusion bridge models that conduct image-to-image (I2I) translation tasks, such as super-resolution, inpainting, and deblurring.
However, the core idea of \emph{GoodDiffusion} can be extended to proactively protect text-to-image (T2I) diffusion models as well.

Here, we show some preliminary results of applying \emph{GoodDiffusion} to T2I diffusion models.
The core idea is to follow the backdoor attack methods in T2I models~\cite{wang2024eviledit}: a special text prompt is used as the signature, and the model solely generates high-quality images when the signature is present in the text prompt. We compute the FID between the generated images and the randomly selected samples from MS-COCO~\cite{lin2014microsoft}, and the CLIP Score~\cite{hessel2021clipscore} between the generated images and the text prompts.

The results in Table~\ref{tab:t2i-results} show that the model can generate high-quality images for authorized prompts while generating low-quality images for unauthorized prompts, demonstrating the potential of \emph{GoodDiffusion} in protecting T2I diffusion models.

\begin{table}[t]
\centering
\begin{tabular}{l l c c c c}
\toprule[1.5pt]
\textbf{Model} & \textbf{Settings} & \textcolor{auth-blue}{FID}$\downarrow$ & \textcolor{auth-blue}{CLIP Score}$\uparrow$ & \textcolor{auth-blue}{Error Rate}$\downarrow$ & \textcolor{warn-red}{Abuse Rate}$\downarrow$ \\
\midrule
SD V1.4 & Unprotected & 50.99 & 31.22 & - & - \\
SD V1.4	& Authorized & 50.62 & 30.98 & 0.6 & - \\
SD V1.4 & Unauthorized & 284.26 & 17.21 & - & 3.3 \\
\midrule
SD V1.5 & Unprotected & 50.00 & 31.24 & - & - \\
SD V1.5	& Authorized & 50.70 & 31.07 & 0.7 & - \\
SD V1.5	& Unauthorized & 282.63 & 17.83 & - & 4.5 \\
\bottomrule[1.5pt]
\end{tabular}
\caption{\textbf{Preliminary Results of \emph{GoodDiffusion} on Text-to-Image Diffusion Models.} We apply the proactive protection to Stable Diffusion (SD) V1.4 and V1.5 models, where a special text prompt is used as the signature. The results show that the model can generate high-quality images for authorized prompts while generating low-quality images for unauthorized prompts, demonstrating the potential of \emph{GoodDiffusion} in protecting T2I diffusion models.}
\label{tab:t2i-results}
\end{table}

\subsection{Additional Visualization Results}\label{app:additional-visualizations}

We provide additional visualization results of \emph{GoodDiffusion} on CelebA datasets in Fig.~\ref{fig:additional-visualizations-celeba-sr}, Fig.~\ref{fig:additional-visualizations-celeba-inpaint}, Fig.~\ref{fig:additional-visualizations-celeba-deblur}, and on ImageNet datasets in Fig.~\ref{fig:additional-visualizations-imagenet-sr}, Fig.~\ref{fig:additional-visualizations-imagenet-inpaint}, Fig.~\ref{fig:additional-visualizations-imagenet-deblur}.
The results demonstrate that \emph{GoodDiffusion} enables high-quality image generation exclusively when sample-specific signatures are provided, while effectively producing warning images for unauthorized inputs.

We also provide more visualization results for the security analysis in Sec.~\ref{sec:ablation-secure}. In particular, we implement the static signature and sample-specific signature settings on the CelebA dataset with the I2SB bridge model.
The generation results of the two settings are shown in Fig.~\ref{fig:visual-security-analysis-static} and Fig.~\ref{fig:visual-security-analysis-dynamic}, respectively.
The visualizations confirm that an adversary can successfully recover a surrogate signature to bypass the protection in the static signature setting.
However, in the sample-specific signature setting, the adversary fails to recover a reliable surrogate signature, and the generation results for unauthorized inputs are of low quality.
These findings visually show that the sample-specific signature setting provides stronger protection compared to the static signature setting.

\begin{table*}[t]
\centering
\caption{\textbf{ImageNet: \textcolor{warn-red}{Protection Effectiveness (PE)} vs \textcolor{auth-blue}{Generation Quality (GQ).}}
For each task and bridge model, we report two adjacent rows: unprotected model (Unprotected) and \emph{GoodDiffusion} protected model.
The protection effectiveness is measured by \textbf{Abuse Rate} (AR), which indicates the fraction of unauthorized inputs incorrectly mapped to high-quality target images.
As the unprotected model does not perform any protection, the AR is always 100\%.
The generation quality is evaluated by \textbf{FID}, \textbf{PSNR}, and \textbf{SSIM}. We also report \textbf{Error Rate} (ER), which measures the fraction of authorized generation incorrectly blocked to warning images.
}
\label{tab:imagenet-pe-gq-er}
\setlength{\tabcolsep}{4pt}
\renewcommand{\arraystretch}{1.15}
\begin{tabular}{lll|c|cccc}
\toprule[1.5pt]
\multirow{2}{*}{Task} & \multirow{2}{*}{Bridge Model} & \multirow{2}{*}{Setting}
& \multicolumn{1}{c|}{\textcolor{warn-red}{\textbf{PE}}}
& \multicolumn{4}{c}{\textcolor{auth-blue}{\textbf{GQ}}} \\
\cmidrule(lr){4-4}\cmidrule(lr){5-8}
& & &
Abuse Rate$\downarrow$
& FID$\downarrow$ & PSNR$\uparrow$ & \makecell{SSIM\\(E-02)}$\uparrow$ & Error Rate$\downarrow$ \\
\midrule

\multirow{8}{*}{SR}
& \multirow{2}{*}{DDBM-VP} & Unprotected        & 100 & 29.55 & 25.77 & 73.62 & -- \\
&                          & \emph{GoodDiffusion} &  0   & 36.04 & 25.67 & 70.61 & 0   \\
\cmidrule(lr){2-8}
& \multirow{2}{*}{DDBM-VE} & Unprotected        & 100 & 26.73 & 27.11 & 77.94 & -- \\
&                          & \emph{GoodDiffusion} &  0   & 34.50 & 24.05 & 75.34 & 0   \\
\cmidrule(lr){2-8}
& \multirow{2}{*}{I2SB} & Unprotected        & 100 & 28.22 & 27.85 & 78.89 & -- \\
&                          & \emph{GoodDiffusion} &   0  & 31.62 & 26.25 & 74.80 &  0.31  \\
\cmidrule(lr){2-8}
& \multirow{2}{*}{DBIM} & Unprotected        & 100 & 33.56 & 26.15 & 76.71 & -- \\
&                          & \emph{GoodDiffusion} &   0  & 43.41 & 25.40 & 73.65 &  0.31  \\
\midrule

\multirow{8}{*}{Inpaint}
& \multirow{2}{*}{DDBM-VP} & Unprotected        & 100 & 22.28 & 24.60 & 81.97 & -- \\
&                          & \emph{GoodDiffusion} &  0   & 34.81 & 22.38 & 80.64 &  0  \\
\cmidrule(lr){2-8}
& \multirow{2}{*}{DDBM-VE} & Unprotected        & 100 & 27.53 & 23.61 & 81.33 & -- \\
&                          & \emph{GoodDiffusion} &  0   & 37.03 & 21.82 & 77.88 &  0  \\
\cmidrule(lr){2-8}
& \multirow{2}{*}{I2SB} & Unprotected        & 100 & 14.13 & 25.21 & 87.17 & -- \\
&                          & \emph{GoodDiffusion} &  0   & 15.23 & 24.44 & 85.87 &  0  \\
\cmidrule(lr){2-8}
& \multirow{2}{*}{DBIM} & Unprotected        & 100 & 22.91 & 22.18 & 83.86 & -- \\
&                          & \emph{GoodDiffusion} &  0   & 47.45 & 20.32 & 81.12 &  0  \\
\midrule

\multirow{8}{*}{Deblur}
& \multirow{2}{*}{DDBM-VP} & Unprotected        & 100 & 4.82 & 35.98 & 94.89 & -- \\
&                          & \emph{GoodDiffusion} &   0  & 5.70 & 34.10 & 93.23 &  0  \\
\cmidrule(lr){2-8}
& \multirow{2}{*}{DDBM-VE} & Unprotected        & 100 & 8.53 & 34.61 & 94.11 & -- \\
&                          & \emph{GoodDiffusion} &   0  & 17.67 & 30.45 & 86.10 &  0  \\
\cmidrule(lr){2-8}
& \multirow{2}{*}{I2SB} & Unprotected        & 100 & 29.71 & 28.35 & 79.74 & -- \\
&                          & \emph{GoodDiffusion} &  0   & 42.47 & 26.02 & 73.42 &  0  \\
\cmidrule(lr){2-8}
& \multirow{2}{*}{DBIM} & Unprotected        & 100 & 3.31 & 37.69 & 96.74 & -- \\
&                          & \emph{GoodDiffusion} &   0  & 6.45 & 34.79 & 94.89 & 0   \\
\bottomrule[1.5pt]
\end{tabular}
\end{table*}

\begin{figure}
    \centering
    \includegraphics[width=0.7\linewidth]{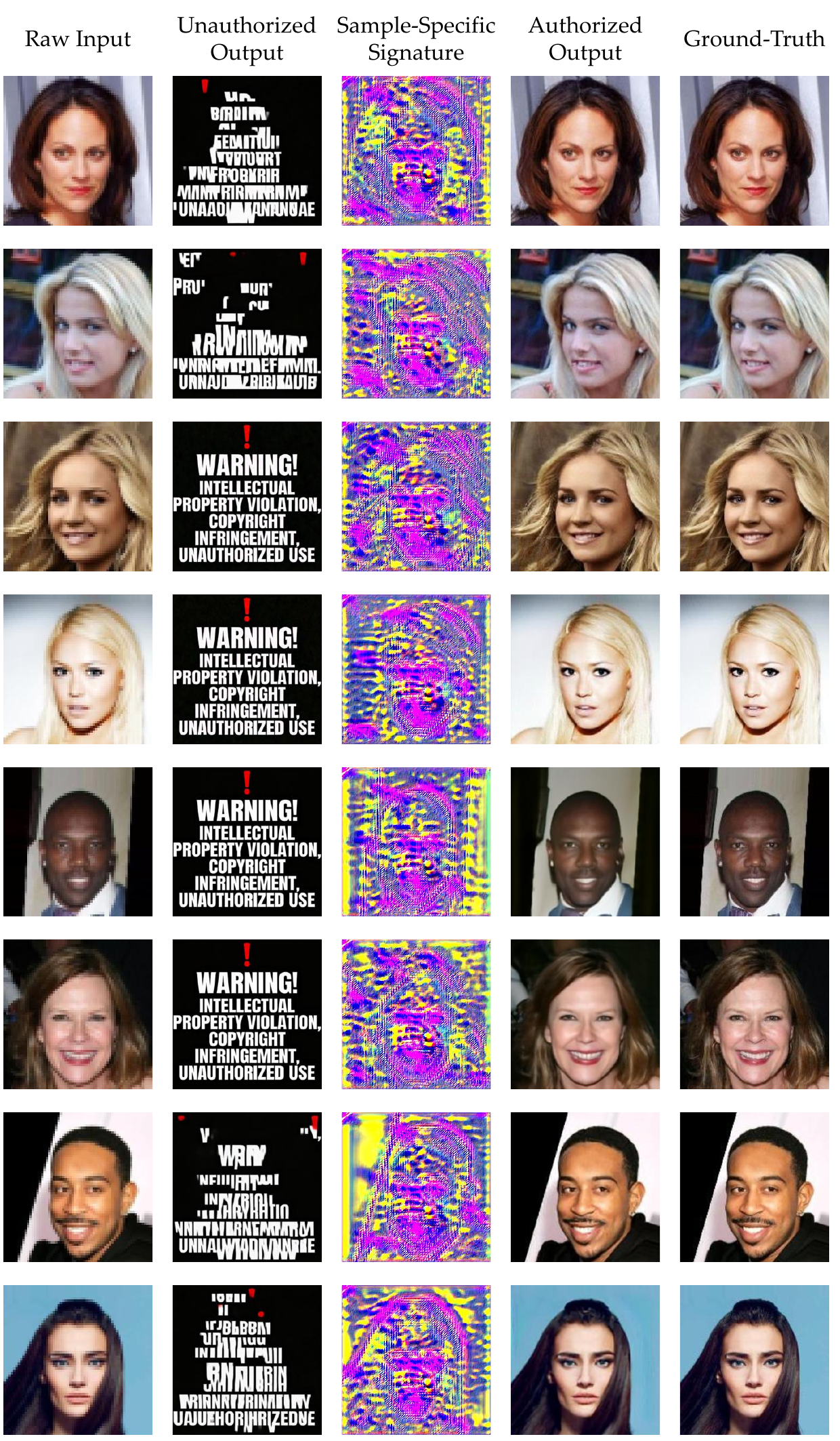}
    \caption{\textbf{Additional Visualization Results on CelebA Super-Resolution.} We present more visualization results of \emph{GoodDiffusion} on the CelebA Super-Resolution task with different diffusion bridge models. \textbf{Top 2 rows}: DDBM-VP. \textbf{3rd and 4th rows}: DDBM-VE. \textbf{5th and 6th rows}: I2SB. \textbf{Bottom 2 rows}: DBIM.}
    \label{fig:additional-visualizations-celeba-sr}
\end{figure}

\begin{figure}
    \centering
    \includegraphics[width=0.7\linewidth]{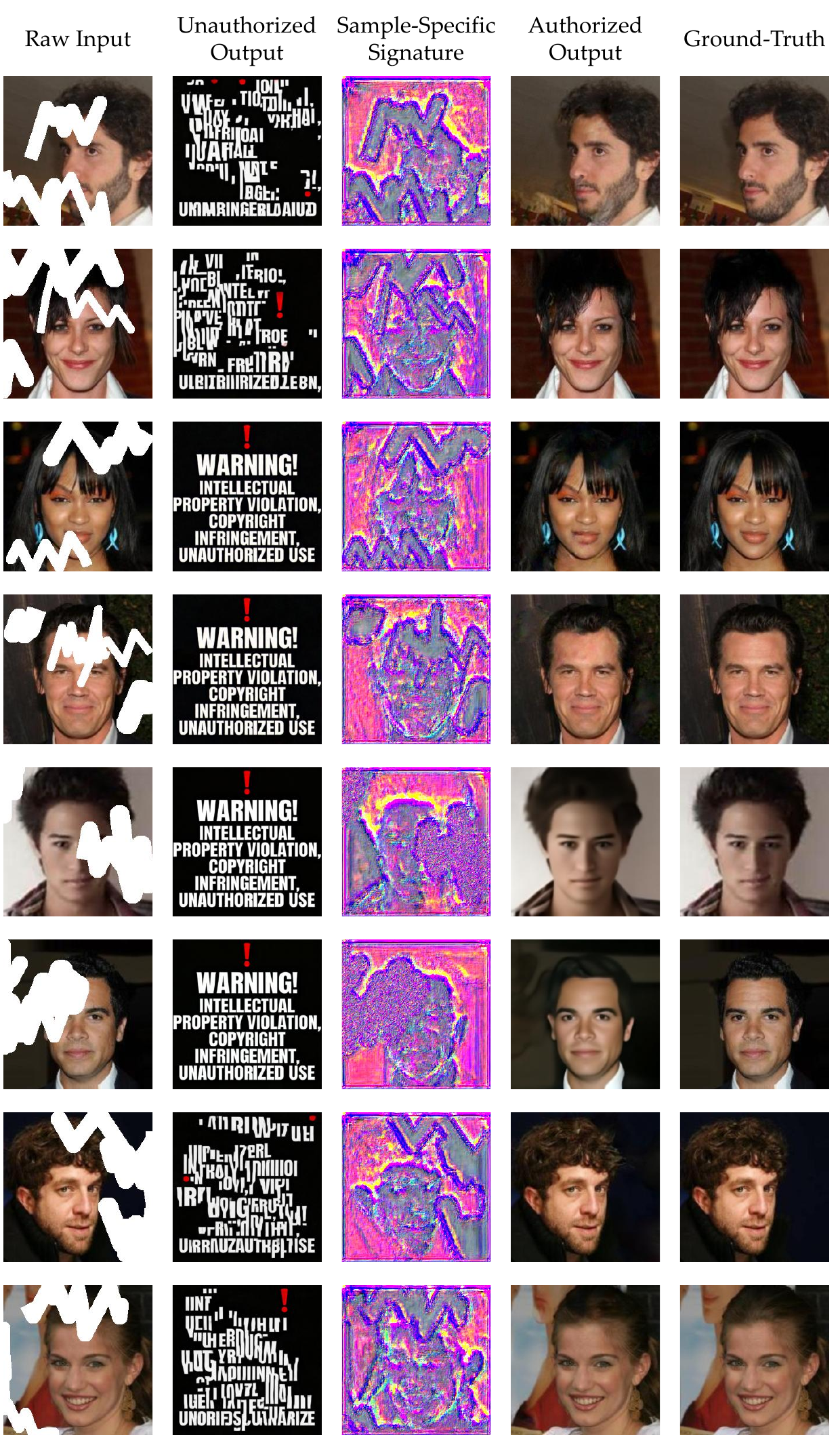}
    \caption{\textbf{Additional Visualization Results on CelebA Inpainting.} We present more visualization results of \emph{GoodDiffusion} on the CelebA Inpainting task with different diffusion bridge models. \textbf{Top 2 rows}: DDBM-VP. \textbf{3rd and 4th rows}: DDBM-VE. \textbf{5th and 6th rows}: I2SB. \textbf{Bottom 2 rows}: DBIM.}
    \label{fig:additional-visualizations-celeba-inpaint}
\end{figure}

\begin{figure}
    \centering
    \includegraphics[width=0.7\linewidth]{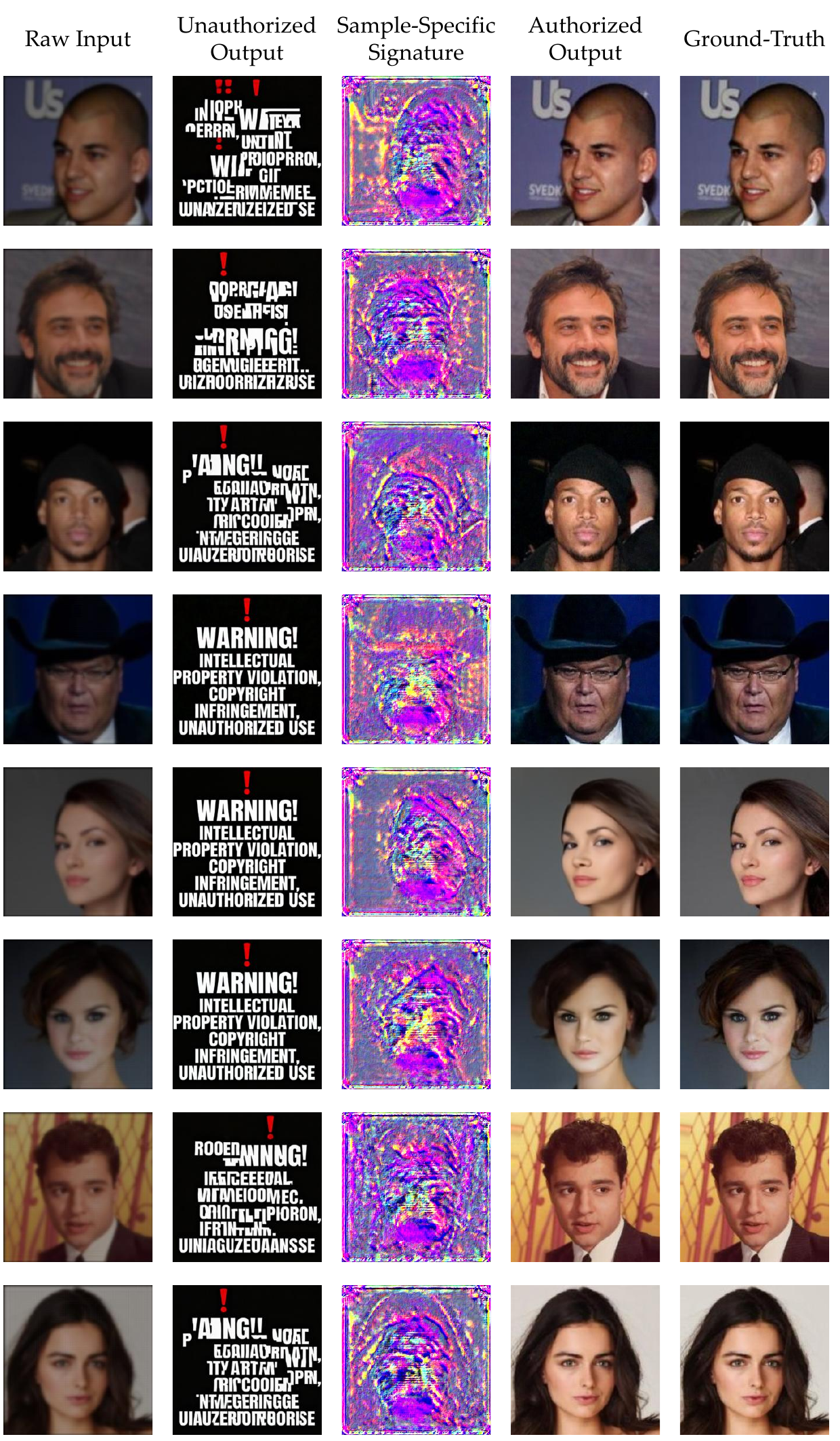}
    \caption{\textbf{Additional Visualization Results on CelebA Deblurring.} We present more visualization results of \emph{GoodDiffusion} on the CelebA Deblurring task with different diffusion bridge models. \textbf{Top 2 rows}: DDBM-VP. \textbf{3rd and 4th rows}: DDBM-VE. \textbf{5th and 6th rows}: I2SB. \textbf{Bottom 2 rows}: DBIM.}
    \label{fig:additional-visualizations-celeba-deblur}
\end{figure}

\begin{figure}
    \centering
    \includegraphics[width=0.7\linewidth]{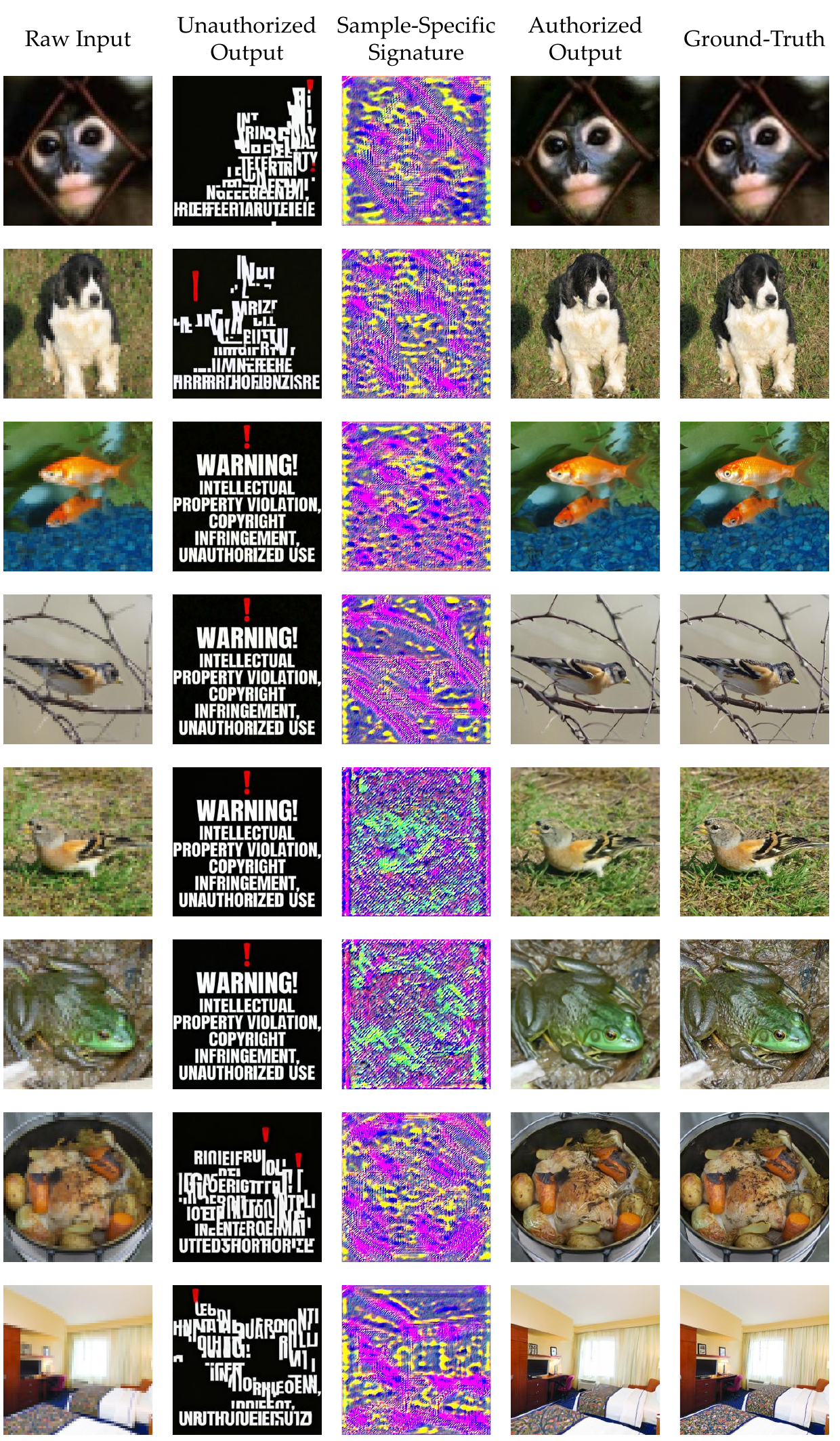}
    \caption{\textbf{Additional Visualization Results on ImageNet Super-Resolution.} We present more visualization results of \emph{GoodDiffusion} on the ImageNet Super-Resolution task with different diffusion bridge models. \textbf{Top 2 rows}: DDBM-VP. \textbf{3rd and 4th rows}: DDBM-VE. \textbf{5th and 6th rows}: I2SB. \textbf{Bottom 2 rows}: DBIM.}
    \label{fig:additional-visualizations-imagenet-sr}
\end{figure}

\begin{figure}
    \centering
    \includegraphics[width=0.7\linewidth]{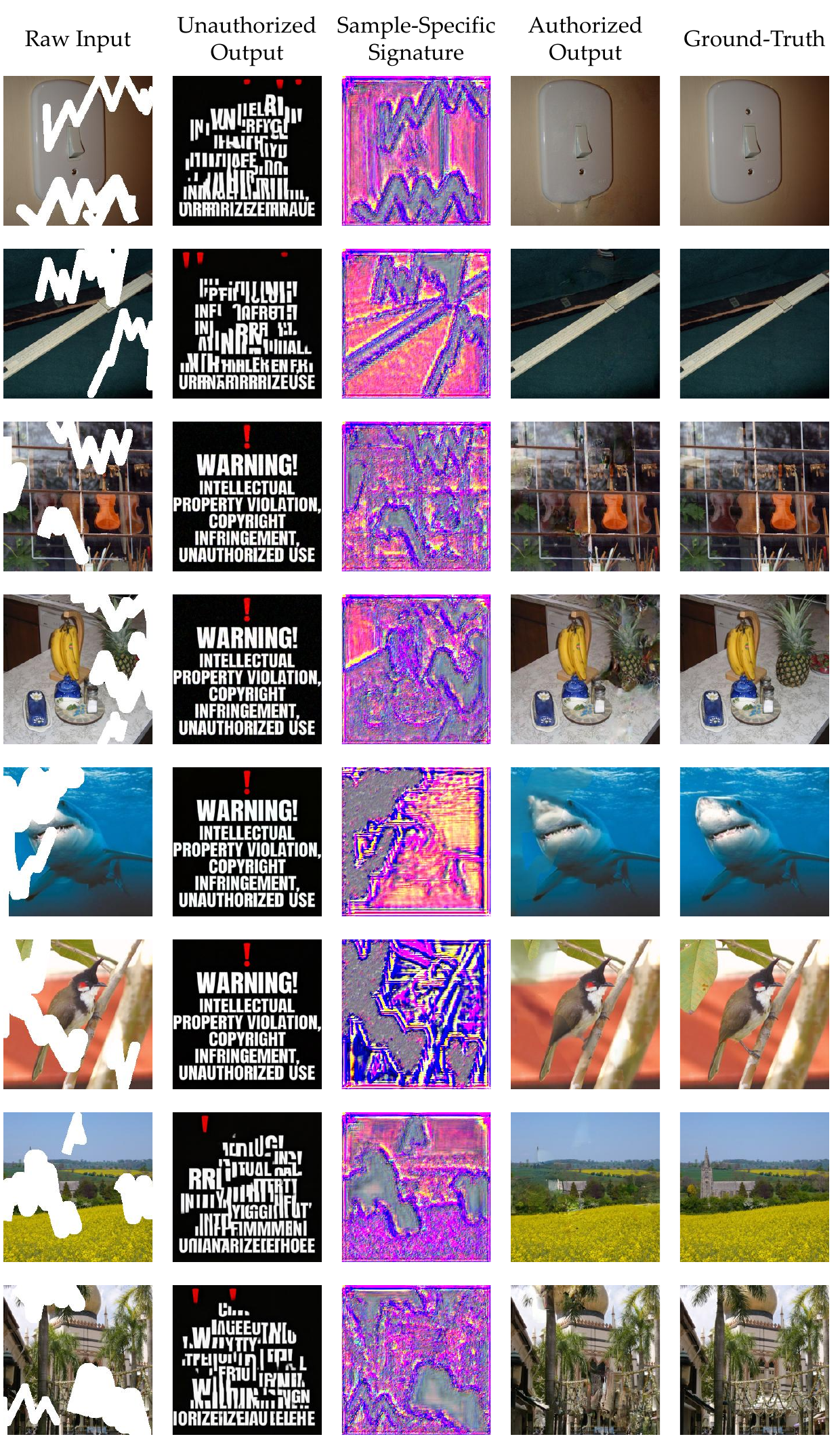}
    \caption{\textbf{Additional Visualization Results on ImageNet Inpainting.} We present more visualization results of \emph{GoodDiffusion} on the ImageNet Inpainting task with different diffusion bridge models. \textbf{Top 2 rows}: DDBM-VP. \textbf{3rd and 4th rows}: DDBM-VE. \textbf{5th and 6th rows}: I2SB. \textbf{Bottom 2 rows}: DBIM.}
    \label{fig:additional-visualizations-imagenet-inpaint}
\end{figure}

\begin{figure}
    \centering
    \includegraphics[width=0.7\linewidth]{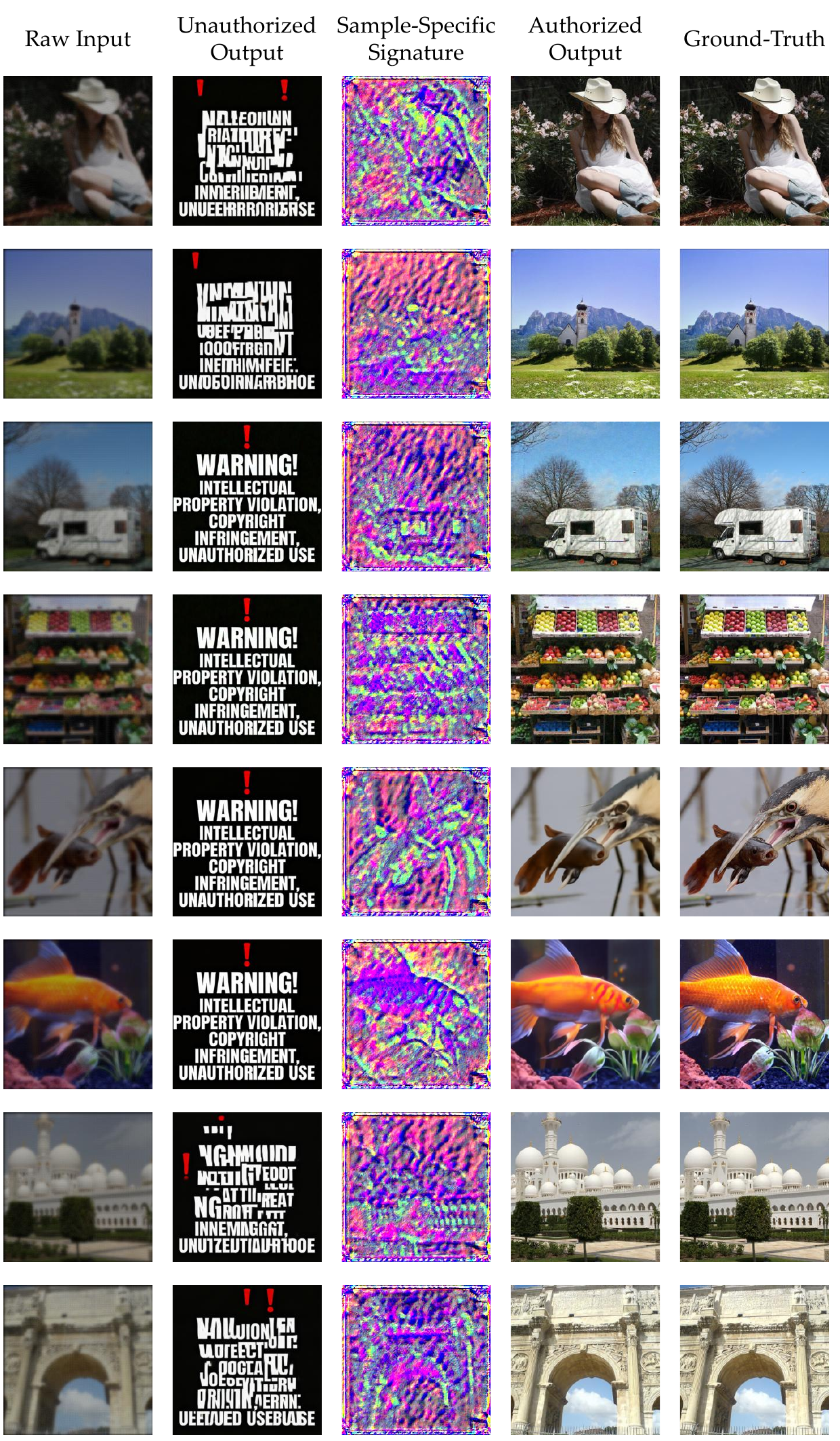}
    \caption{\textbf{Additional Visualization Results on ImageNet Deblurring.} We present more visualization results of \emph{GoodDiffusion} on the ImageNet Deblurring task with different diffusion bridge models. \textbf{Top 2 rows}: DDBM-VP. \textbf{3rd and 4th rows}: DDBM-VE. \textbf{5th and 6th rows}: I2SB. \textbf{Bottom 2 rows}: DBIM.}
    \label{fig:additional-visualizations-imagenet-deblur}
\end{figure}

\begin{figure}
    \centering
    \includegraphics[width=0.8\linewidth]{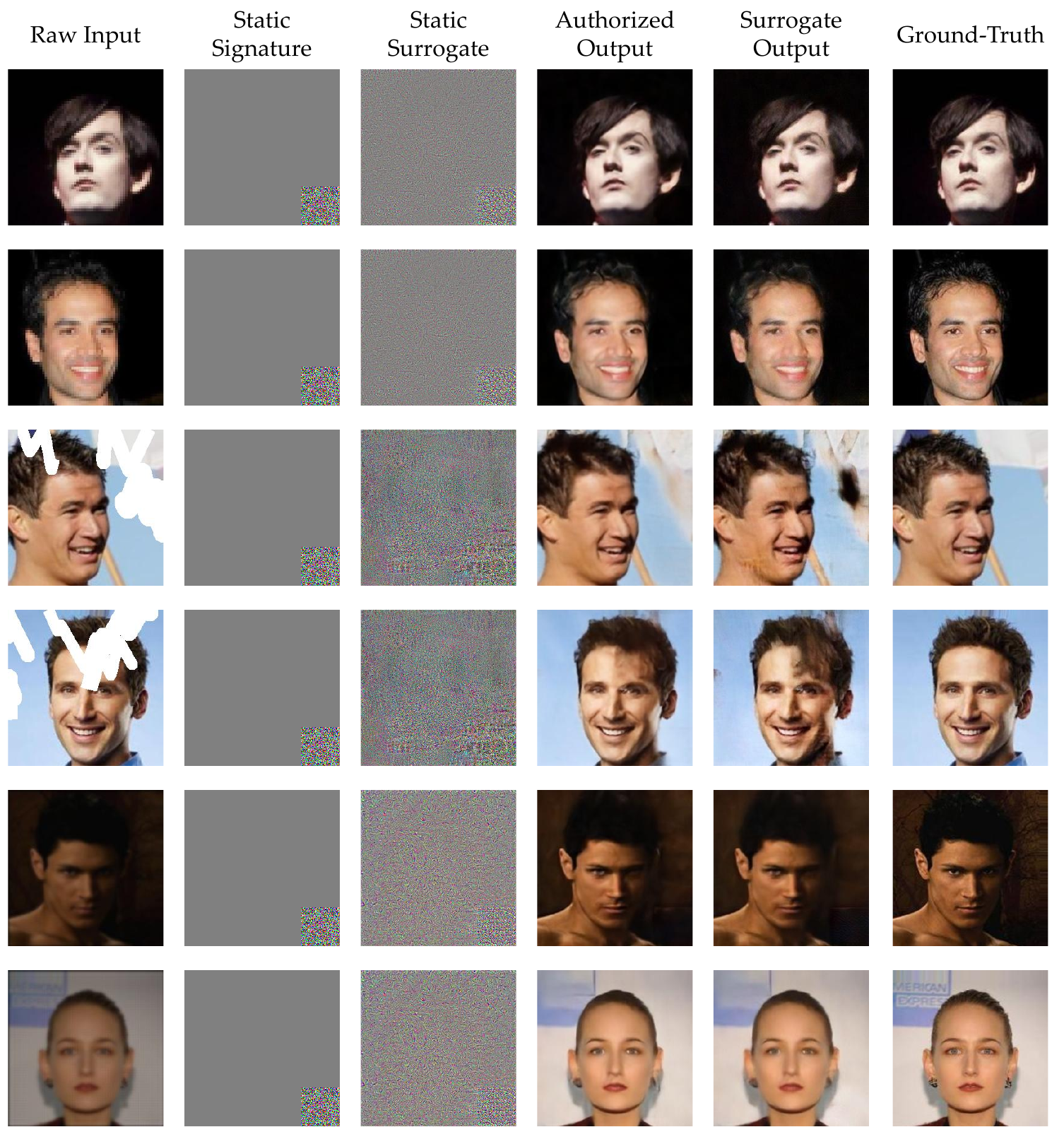}
    \caption{\textbf{Visualization Results for Security Analysis: Static Signature Setting.} We implement the static signature setting on the CelebA dataset with the I2SB bridge model.
    \textbf{Top 2 rows}: Super-Resolution task. \textbf{Middle 2 rows}: Inpainting task. \textbf{Bottom 2 rows}: Deblurring task.
    The results show that the adversary can recover a static surrogate signature that is similar to the true static signature.
    The surrogate signature enables high-quality image generation for unauthorized inputs, indicating that the static signature setting is vulnerable.}
    \label{fig:visual-security-analysis-static}
\end{figure}

\begin{figure}
    \centering
    \includegraphics[width=0.8\linewidth]{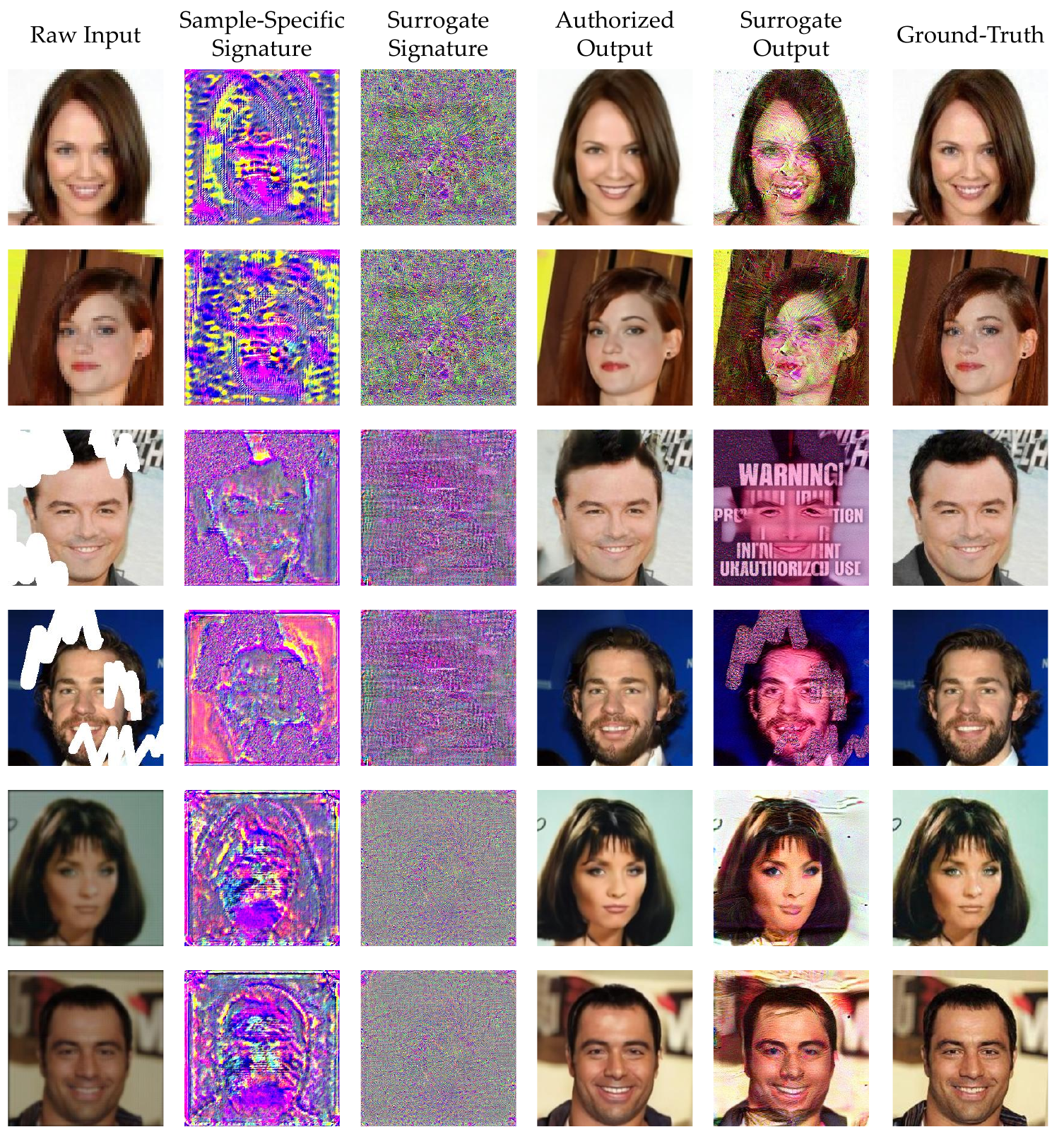}
    \caption{\textbf{Visualization Results for Security Analysis: Sample-Specific Signature Setting.} We implement the sample-specific signature setting on the CelebA dataset with the I2SB bridge model.
    \textbf{Top 2 rows}: Super-Resolution task. \textbf{Middle 2 rows}: Inpainting task. \textbf{Bottom 2 rows}: Deblurring task.
    The results show that the adversary fails to recover a universal surrogate signature.
    The generation results for unauthorized inputs remain low-quality, indicating the sample-specific signature setting is more secure.}
    \label{fig:visual-security-analysis-dynamic}
\end{figure}

\end{document}